\newcommand{\Int}{\KwSty{int}}
\newcommand{\Typenode}{\KwSty{Typenode}}
\newcommand{\Enum}{\KwSty{enum}}
\newtheorem{definition}{Definition}
\newtheorem{theorem}{Theorem}
\newtheorem{proposition}{Proposition}
\newtheorem{lemma}{Lemma}
\newtheorem{corollary}{Corollary}
\newenvironment{proof}{\par\noindent{\textsc{Proof:}}\space}{$\Box$\protect\\ \par}
\newcommand{\leaf}{\mathsf{con}}
\newcommand{\vct}{\mathsf{vec}}
\newcommand{\idx}{\mathsf{idx}}
\newcommand{\strc}{\mathsf{strc}}
\newcommand{\idxbuc}{\mathsf{idxbuc}}
\newcommand{\vctbuc}{\mathsf{vecbuc}}
\newcommand{\lookup}{\mathrm{lookup}}
\newcommand{\sequ}[1]{\langle #1 \rangle}
\newcommand{\cost}{\mathrm{cost}}
\title{Polynomial-time Construction of Optimal Tree-structured
  Communication Data Layout Descriptions\thanks{This work was
    co-funded by the European Commission through the EPiGRAM project
    (grant agreement no.\ 610598).}}
\author{Robert Ganian\\
       Algorithms and Complexity Group\\
       Vienna University of Technology\\
       Austria\\
       \texttt{rganian@gmail.com}
\and Martin Kalany\\
       Parallel Computing Group\\
       Vienna University of Technology\\
       Austria\\
       \texttt{kalany@par.tuwien.ac.at}
\and Stefan Szeider\\
       Algorithms and Complexity Group\\
       Vienna University of Technology\\
       Austria\\
       \texttt{stefan@szeider.net}
\and Jesper Larsson Tr\"aff\\
       Parallel Computing Group\\
       Vienna University of Technology\\
       Austria\\
       \texttt{traff@par.tuwien.ac.at}
}
\begin{document}
\maketitle

\begin{abstract}
We show that the problem of constructing \emph{tree-structured
  descriptions of data layouts} that are optimal with respect to space
or other criteria, from given sequences of displacements, can be
solved in \emph{polynomial time}. The problem is relevant for
efficient compiler and library support for communication of
non-con\-ti\-gu\-ous data, where tree-structured descriptions with
low-degree nodes and small index arrays are beneficial for the
communication soft- and hardware. An important example is the
Message-Passing Interface (MPI) which has a mechanism for describing
arbitrary data layouts as trees using a set of increasingly general
constructors. Our algorithm shows that the so-called MPI
\emph{datatype reconstruction problem by trees} with the full set of
MPI constructors can be solved optimally in polynomial time, refuting
previous conjectures that the problem is NP-hard. Our algorithm can
handle further, natural constructors, currently not found in MPI.

Our algorithm is based on dynamic programming, and requires the
solution of a series of shortest path problems on an incrementally
built, directed, acyclic graph. The algorithm runs in $O(n^4)$ time
steps and requires $O(n^2)$ space for input displacement sequences of
length $n$.
\end{abstract}

\section{Introduction}
\label{sec:introduction}
It is a common situation for instance in parallel, numerical libraries
that substructures of large, static data structures have to be
communicated among
processors~\cite{ChoiDongarraOstrouchovPetitetWalkerWhaley96,PoulsonMarkerHammondRomerovandeGeijn13},
e.g., row- or column vectors or sub-matrices of multi-dimensional
matrices, or irregular substructures corresponding to the non-zeros or
other special elements of larger structures. This requires efficient
access to the typically non-contiguously stored substructure elements
in some predefined order, either for the application which
``(un)packs'' the elements (from) to some structured communication
buffer, or for the communication soft- or hardware to handle the
non-consecutive communication in a way that is transparent to the
application. For the latter approach, concise and efficient
descriptions of such substructures are needed. For instance, lists of
element addresses or displacements are neither concise (space
proportional to the number of elements is required) nor efficient
(processing time is at least doubled, since also the list has to be
traversed). For substructures with some regularities, much better
representations are obviously possible. Often, tree representations
are used with leaves describing base-types and interior constructor
nodes how subtrees are repeated. For example, complex data types in
C-like languages can be built recursively using a small number of
constructors (like arrays and \texttt{struct}s) from given primitive
types (\texttt{int}s, \texttt{char}s, \texttt{double}s, etc.), and the
resulting type trees describe to the compiler how data are laid out in
memory. The same kind of mechanism could be used to describe
substructures of such data types (but is not a part of C). The
Message-Passing Interface (MPI)~\cite{MPI-3.0} is an important example
of a parallel communication interface, indeed often used to implement
parallel numerical
libraries~\cite{ChoiDongarraOstrouchovPetitetWalkerWhaley96,PoulsonMarkerHammondRomerovandeGeijn13},
which provides a generic, explicit mechanism for describing
non-consecutive application data to allow the library implementation
to perform non-consecutive communication in an efficient way, possibly
by directly exploiting hardware features for, e.g., strided,
non-consecutive communication. Given such a tree-structured
description of an application data layout, it is a natural question to
ask whether this description is optimal under some given cost model
reflecting the cost of storing or processing the
description. Likewise, given a trivial description of a data layout in
the form of a long list of addresses (or offsets, or displacements),
it is natural to ask for an algorithm for constructing an efficient,
that is, cost-optimal representation as a tree with some given set of
constructors. In the MPI community, the former problem is referred to
as \emph{type normalization}, and the latter as \emph{type
  reconstruction}~\cite{Traff11:typeguide}. Both problems are
eventually important for the implementation of very high-quality MPI
libraries. The problems would be similarly important in other parallel
interfaces or languages supporting communication of arbitrarily
structured, non-consecutive data. Ideally, a compiler would be able to
perform the normalization (optimization) of data layout descriptions
given more or less explicitly by the application programmer in the
code with the constructs available in the parallel
language~\cite{SchneiderKjolstadHoefler13}.

In this paper, we investigate primarily the type reconstruction
problem for a given set of constructors, that is, the problem of
finding the most concise tree representation of a given substructure
specified by an explicit list of displacements. As the set of
constructors, we use a convenient abstraction of the type constructors
found in MPI~\cite[Chapter 4]{MPI-3.0}. This is both a natural and
powerful set that includes constructors for the case where a single
substructure is repeated in a regular or irregular pattern as well as
the case where different substructures are concatenated with given
displacements. Our main result is to show that an optimally concise
tree representation can be found in polynomial time for the whole set
of constructors, and thus as a corollary that both type reconstruction
and type normalization for the whole set of MPI derived data type
constructors can be solved in polynomial time.  This is an interesting
result since the computational hardness of the problem was not known
before.  Indeed, the problem was believed not to be in $P$ by parts of
the MPI community. Specifically, we give an algorithm that finds an
optimal type tree description for a sequence of displacements of
length $n$ in $O(n^4)$ operations. The algorithm is based on a
non-trivial use of dynamic programming requiring the solution of a
single-source shortest path problem for each new subproblem
solution. Using standard dynamic programming techniques, the space
requirement is $O(n^2)$.

MPI libraries typically employ simple forms of type normalization to
derived data types set up by the application programmer (this is
folklore, but
see~\cite{KjolstadHoeflerSnir11,KjolstadHoeflerSnir12,RossMillerGropp03}
for explicit descriptions). In recent
papers~\cite{Traff14:normalization,Traff15:mpilinear}, the problem was
more systematically analyzed, and it was shown that when restricted to
certain homogeneous constructors (those having a single child) the
reconstruction and normalization problems can be solved quite
efficiently in low, polynomial time. It was explicitly conjectured 
that the problems with the full set of MPI derived data type 
constructors would be NP-hard~\cite{Traff15:mpilinear, Traff11:typeguide}. 
We stress that when it is allowed to fold the constructed
trees into even more concise, directed acyclic graphs (DAGs), the
optimality of our construction is no longer guaranteed. We discuss
this problem at the end of the paper.

The notion of an optimal tree-like representation of a data layout is
of course relative to the way the tree will be used and processed by
the parallel programming language or library
implementation. Processing typically includes the ability to pack and
unpack parts of the layout independently using hardware support for
blocked, strided memory access and similar features of the
communication subsystem. We do not deal with the problem of efficient
datatype-tree processing here, but abstract storage and processing
costs with a simple, parameterized cost model, which must be adapted
to the concrete situation. The literature on optimization of the
processing of tree representations of data layouts in MPI is large;
some pointers are given in~\cite{Traff14:normalization}.

The paper is structured as follows. We define the set of considered
constructors and precisely formulate the type reconstruction problem
in Section~\ref{sec:problem}.  Our main result is given in
Section~\ref{sec:treereconstruc}, which describes our dynamic
programming algorithm, proves correctness and establishes the
complexity bound. In Section~\ref{sec:generalizations} we discuss how
our approach can be extended to include other convenient and in
specific situations more concise constructors, and how the problem
changes when trees can be folded into DAGs. Concluding remarks,
including a discussion of relevant future work in this area are given
in Section~\ref{sec:conclusion}.

\section{The type reconstruction problem}
\label{sec:problem}

A \emph{data layout} is an ordered sequence of relative (integer)
displacements, each indexing a certain base data type (integer, char,
floating point number) relative to some base address. Since the
semantics of base-types will not be important for the following, we
abstract the problem to consider from here onward \emph{displacement
  sequences} which we write as $D=\sequ{d_0,d_1,\ldots,d_{n-1}}$ with
the displacements $D[i]=d_i$ being indexed from $0$ to $n-1$. We point
out that the complexity of the problems that we investigate does not
change by considering full \emph{type maps} consisting of sequences of
displacements with their associated basetype (and number of bytes
occupied), as would have to be done in a concrete implementation of
our algorithms for real libraries, although of course the structure of
the reconstructed types may look different.  A \emph{segment} of an
$n$-element displacement sequence from index $i$ to index $j$ is
denoted by $D[i,j]=\sequ{d_i,d_{i+1},\ldots,d_{j}}$, $0\leq i\leq
j<n$.  A \emph{prefix} of length $c$ is the segment $D[0,c-1]$. The
displacements of the sequence are arbitrary (non-negative, negative)
integers, and the same displacement can appear more than once
(although this will normally not be the case, and is often disallowed,
e.g., for some uses of derived data types in MPI). Thinking of
displacements as (Byte) addresses, it is clear that any application
data layout can be described by a displacement sequence. The ordering
constraint (displacement sequence, \emph{not} displacement set)
implies that data are accessed in a specific order.  This is often
important for data layouts used in communication operations.

Displacement sequences typically contain regularities and some form of
structure, since they can be thought of as arising from a specific
application, and this can be exploited to obtain more concise
descriptions.  We do this by type trees, where interior
\emph{constructor nodes} describe some ordered catenation of the
layout(s) described by the child(ren) node(s).  It is natural to ask
for an efficient, polynomial time algorithm for computing the most
concise and efficient representation for a given set of constructors
and cost model.

We consider the following set of constructors that subsume
constructors found in C-like programming languages, as well as the
derived data type constructors found in MPI:

\noindent
\begin{definition}[Basic type constructors]
\label{def:baseconstructors}
A \emph{basic tree} may be constructed from the following four 
\emph{basic constructors}:
\begin{enumerate}
\item
A \emph{leaf} $\leaf(c)$ with \emph{count} $c$ describes a
sequence of $c$ adjacent relative displacements $0,1,2,\ldots,c-1$.
\item
A \emph{(homogeneous) vector} $\vct(c,d,C)$ with \emph{count} $c$
and \emph{stride} $d$ describes the catenation of $c$ sequences $C$ at
relative displacements $0, d, 2d, \ldots, (c-1)d$.
\item
A \emph{(homogeneous) index}
$\idx(c,\sequ{i_0,i_1,\ldots,i_{c-1}},C)$ with \emph{count} $c$ and
\emph{indices} $\sequ{i_0,i_1,\ldots,i_{c-1}}$ describes the
catenation of $c$ sequences $C$ at relative displacements
$i_0,i_1,\ldots,i_{c-1}$.
\item
A \emph{heterogeneous index}, or \emph{struct},
$\strc(c,\allowbreak \sequ{i_0,i_1,\ldots,i_{c-1}},\allowbreak \sequ{C_0,C_1,\ldots,C_{c-1}})$,
with \emph{count} $c$ and \emph{indices}
$\sequ{i_0,i_1,\ldots,i_{c-1}}$ describes the catenation of $c$
sequences $C_0,C_1,\ldots,C_{c-1}$ at relative displacements
$i_0,i_1,\ldots,i_{c-1}$.
\end{enumerate}
\end{definition}

For example, the displacement sequence $\sequ{3,5,7,9,11}$ can be
described by $\idx(1,\allowbreak \sequ{3},\vct(5, \allowbreak 2,
\allowbreak \leaf(1)))$.  A more involved example is shown in
Figure~\ref{fig:typetreeExample}.  Note that any displacement sequence
$D$ of length $n$ can trivially be represented as $\idx(n, D,
\leaf(1))$.

\begin{figure}[t]
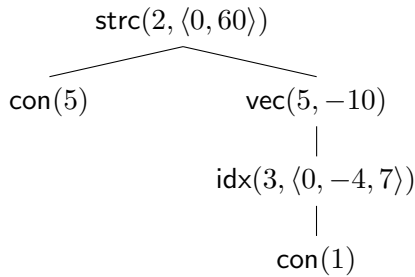

\centering
\tikzset{level distance=30pt, sibling distance=40pt}
\Tree 
    [.$\strc(2,\sequ{0,60})$ 
	[.$\leaf(5)$ ] 
	[.$\vct(5,-10)$
	    [.$\idx(3,\sequ{0,-4,7})$ 
		[.$\leaf(1)$ ]
	    ]
	]
    ]
\caption{Type tree representing the displacement sequence $D = \langle 0$, $1$, $2$, $3$, $4$, $60$, $56$, $67$, $50$, $46$, $57$, $40$, $36$, $47$, $30$, $26$, $37$, $20$, $16$, $27\rangle$.
Note that if the $\strc$ constructor is not allowed, the only way to represent this displacement sequence is the trivial representation $\idx(20, D, \leaf(1))$.}
\label{fig:typetreeExample}
\end{figure}
We refer to vertices of type trees as \emph{nodes}, where each node is
one of the constructors.

It can easily be shown that each of the MPI derived data type
constructors (for contiguous, vector, index, and structured
subtrees)~\cite[Chapter 4]{MPI-3.0} is expressible by the basic
constructors of Definition~\ref{def:baseconstructors}, and that the
mapping is almost one-to-one. For instance, the
\texttt{MPI\_Type\_vector} constructor denotes a layout consisting of
a strided sequence of blocks, each being a strided sequence of some
type $B$. This is expressed as $\vct(c,s,\vct(b,e,B))$ where $c$ is
the number of blocks, $s$ their stride, $b$ the number of elements in
each block, and $e$ the stride used within each block. We treat base
types as sequences of bytes which can be expressed by leaf nodes,
e.g., a 32-bit entity like \texttt{int} would be expressed by
$\leaf(4)$.  The $\idx$ constructor makes it possible to express the
repetition of the same layout $B$ each at some arbitrary displacement;
for this only the sequence of start indices (and the size of this
sequence) needs to be represented. The most expressive, arbitrary
\emph{branching constructor} $\strc$ can express the catenation of a
sequence of possibly different, smaller layouts each starting at an
arbitrary displacement. This is the only constructor node with arity
greater than one. In contrast to the similar MPI constructor
\texttt{MPI\_\-Type\_\-create\_\-struct}, which also takes a
repetition count (blocklength) for each substructure, the $\strc$
constructor saves this extra sequence. If a substructure is indeed a
repetition of some even smaller substructure, this information is part
of the substructure and not of the $\strc$ node itself. The basic
constructors increase in generality and storage cost: an $\idx$ node
is a $\strc$ node where all substructures are similar, and therefore
does not need to store a sequence of subtypes; a $\vct$ node is an
$\idx$ node with regularly strided displacements, which can be
computed from a single scalar instead of storing an explicit index
sequence. As the example in Figure~\ref{fig:typetreeExample} shows,
the $\strc$ constructor makes unbounded compression possible over the
$\idx$ constructor.

To make it possible to express further common patterns without
redundancy, we also consider a few auxiliary constructors. The
patterns that these constructors capture can all be expressed by
two-level nestings of basic constructors, but possibly at a higher
cost. For practical purposes and depending on the application usage
patters that are intended to be supported, it might therefore make
sense to have a richer set of constructors. For instance, MPI has both
an \texttt{MPI\_\-Type\_\-create\_\-indexed\_\-block} (which is
captured by the $\idx$ basic constructor node) and an
\texttt{MPI\_\-Type\_\-indexed} constructor which stores also a
repetition count for each index. In cases where all substructures are
repeated the same number of times, this is strictly redundant, and
there are therefore use cases for both constructors. We include the
auxiliary constructors to argue informally that our algorithm can
handle a large set of reasonable constructors.

\noindent
\begin{definition}[Auxiliary type constructors]
\label{def:auxconstructors}
An \emph{extended tree} may contain also the following two 
\emph{auxiliary constructors}:
\begin{enumerate}
\item
A \emph{strided bucket}, $\vctbuc(c,\allowbreak d,\allowbreak e,\allowbreak
\sequ{b_0,b_1,\ldots,b_{c-1}},\allowbreak C)$ with \emph{count} $c$ and
\emph{strides} $d,e$ describes the catenation of $c$ sequences at
relative displacements $0, d, 2d,\ldots (c-1)d$. The $i$-th sequence is
the catenation of $b_i$ sequences $C$ at relative displacements
$0,e,2e,\ldots (b_i-1)e$.
\item
An \emph{indexed bucket}, 
$\idxbuc(c,\allowbreak e, \allowbreak \sequ{i_0,i_1,\ldots i_{c-1}}, \allowbreak \sequ{b_0,b_1,\ldots b_{c-1}},C)$, with \emph{count} $c$
and \emph{substride} $e$ describes the catenation of $c$ sequences
at relative indices $i_0,i_1,\ldots i_{c-1}$. The $i$-th sequence
is the catenation of $b_i$ sequences $C$ at relative displacements
$0,e,2e,\ldots (b_i-1)e$.
\end{enumerate}
\end{definition}

As can be seen from the discussion above, the indexed bucket
constructor corresponds to the \texttt{MPI\_\-Type\_\-indexed}
constructor. There is no MPI counterpart of the other, arguably
natural constructor. We discuss these constructors in more detail in
Section~\ref{sec:auxiliary}.

Each basic or extended tree represents one displacement sequence,
obtained by an ordered traversal of the nodes of the type tree. This
process is called \emph{flattening} and is captured by the algorithm
in Listing~\ref{alg:flattening} for the basic constructors; the
auxiliary constructors can be handled similarly. The converse is not
true: a displacement sequence will almost always have several possible
type tree representations.

\begin{algorithm}[t]
\Fn{\FnFlatten{T, base}}{
    \Switch{T.nodetype}{
        \Case(\tcc*[f]{leaf of consecutive indices}){$\leaf$}{ 
            \For{$i \gets 0$; $i<T.c$; $i{+}{+}$}{
                print $base+i$
            }
        }
        \Case(\tcc*[f]{strided layout}){$\vct$}{
             \For{$i \gets 0$; $i<T.c$; $i{+}{+}$}{
                \FnFlatten{$T.subtype$, $base+i\cdot T.d$}
             }
        }
        \Case(\tcc*[f]{indexed layout}){$\idx$}{
             \For{$i \gets 0$; $i<T.c$; $i{+}{+}$}{
                \FnFlatten{$T.subtype$, $base+T.D[i]$}
             }
        }
        \Case(\tcc*[f]{indexed layout with subtypes}){$\strc$}{
            \For{$i \gets 0$; $i<T.c$; $i{+}{+}$}{
                 \FnFlatten($T.subtypes[i]$, $base+T.D[i]$)
            }
        }
    }
}
\caption{Flattening procedure defining the displacement sequence
  represented by a given basic tree $T$. The procedure is called
  with a base offset, which will normally be 0. The procedure can
  trivially be extended to also cover extended trees.}
\label{alg:flattening}
\end{algorithm}

We make no claim that Listing~\ref{alg:flattening} depicts a
particularly good way of implementing
flattening~\cite{Traff99:flattening}. Note that the size of the
displacement sequence described by a type tree $T$ could be much
larger than the number of nodes in $T$. Within this paper, we assume
that all numbers can be represented by a constant number of bits;
otherwise, our main result still holds, but the upper bound on space
requirements increases by a logarithmic factor.

By the \emph{conciseness} of a type tree we mean the space taken by
the representation. This is constant for vector and leaf nodes and
proportional to the size of the index and type sequences for the other
constructors. Processing costs are related to conciseness: the concise
vector constructor that describes a strided repetition of a
sub-pattern can often be handled by strided memory-copy or strided
communication operations, whereas constructors with sequences of
displacements or types need at least a traversal of the corresponding
sequences and typically entails a more irregular and expensive access
to memory. We will therefore first focus on a simple cost model for
optimizing conciseness.

The \emph{cost} of a type node shall be proportional to the number of
words that must be stored to process the node.  This includes the node
type ($\leaf, \vct, \idx, \strc$), count, displacement or pointer to
index or type array, pointer to child node(s), and a lookup cost for
the elements in lists of indices or types:
\begin{eqnarray*}
\cost(\leaf(c)) & = &K_{\leaf} \\
\cost(\vct(c,d,C)) & = & K_{\vct} \\
\cost(\idx(c,\sequ{\ldots},C)) & = & K_{\idx}+cK_{\lookup}\\
\cost(\strc(c,\sequ{\ldots},\sequ{\ldots})) & = & K_{\strc}+2cK_{\lookup}\\
\end{eqnarray*}
The constants can be adjusted to reflect other overheads related to
representing and processing a node.  We define the \emph{cost} of a
type tree $T$ to be the \emph{additive cost} of its nodes $T_i$:
$\cost(T) = \sum_i \cost(T_i)$.

\begin{algorithm}
\Struct{\Typenode}{
    \Enum{} $nodetype$ = \{$\leaf$, $\vct$, $\idx$, $\strc$\}\;
    \Int{} $c$ \tcc*{count}
    \Int{} $d$ \tcc*{stride}
    \Int{} $D[ ]$ \tcc*{displacement of subtypes}
    \Typenode{} $subtype$ \tcc*{subtype}
    \Typenode{} $subtypes$[ ] \tcc*{array of subtypes}
}
\caption{A possible \Typenode{} structure for representing nodes in
  type trees or DAGs.}
\label{lst:typenode}
\end{algorithm}

For the examples given in this paper, we take $K_{\leaf} = K_{\vct} =
K_{\idx} = K_{\strc}$, and $K_{\lookup}=1$.  For instance, with a
C-style structure as shown in Listing~\ref{lst:typenode} to represent
any of the type constructors, all constructors indeed have the same
constant in the cost (which we could take as 6 units). We remark that
our algorithm is not dependent on the specific choice of the cost
function, and that our results also hold for other reasonable cost
functions where the cost of a node is a function of the node itself
and the costs of its children.

We can now formally define the problem that we will solve in the next
section. Recall that a type tree $T$ \emph{represents} a displacement
sequence $D$ if $\texttt{Flatten}(T,0) = D$.
\begin{center}
  \begin{boxedminipage}[t]{\textwidth}
  \begin{quote}
  \textsc{Basic Type Reconstruction Problem}\\ \nopagebreak
  \emph{Instance}: A displacement sequence $D$ of length $n$.\\ \nopagebreak
  \emph{Task}: Find a least-cost (or optimal) basic tree $T$ representing~$D$; that is, $\cost(T)\leq\cost(T')$ for any basic tree $T'$ representing $D$.
\end{quote}
\end{boxedminipage}
\end{center}
\medskip

\section{Basic tree reconstruction in polynomial time}
\label{sec:treereconstruc}

We now present our main result, namely that the \textsc{Basic Type
  Reconstruction Problem} can be solved in polynomial time. 
subsequently show that extending the set of the auxiliary constructors
of Definition~\ref{def:auxconstructors}.

\begin{theorem}
\label{thm:polytree}
For any input displacement sequence $D$ of length $n$, the
\textsc{Basic Type Reconstruction Problem} can be solved in $O(n^4)$
time and $O(n^2)$ space.
\end{theorem}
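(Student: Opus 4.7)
The plan is bottom-up dynamic programming over contiguous segments of $D$. Define $\mathrm{opt}(i,j)$ to be the minimum cost of a basic tree $T$ such that $\mathtt{Flatten}(T,0)$ equals the shifted segment $\sequ{0, d_{i+1}-d_i,\ldots,d_j-d_i}$. Any subtree appearing inside a $\vct$, $\idx$, or $\strc$ node is flattened with the parent's base offset added to every value, so the only subproblems ever needed are relative contiguous segments of $D$; this gives $O(n^2)$ states. I process these in order of increasing length $\ell = j-i+1$.

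For each segment $(i,j)$, the optimal tree has one of four possible root constructors, and I take the minimum over all of them. The $\leaf$ case requires $D[i,j]$ to be a consecutive run, costing $K_{\leaf}$. For $\vct(c,d,C)$ and $\idx(c,\sequ{\ldots},C)$ I iterate over divisors $c$ of $\ell$: the segment must split into $c$ equal-length blocks whose relative shapes (shifted to start at $0$) all coincide with the shape of $D[i, i+\ell/c - 1]$, and for $\vct$ the block-starting displacements must additionally form an arithmetic progression. The candidate cost in each case is the root cost plus the already-computed value $\mathrm{opt}(i, i+\ell/c - 1)$ of the first block. A straightforward implementation verifies all divisors in at most $O(\ell^2)$ time per segment.

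The main obstacle is the $\strc$ case, since a $\strc$ node at the root corresponds to an arbitrary ordered partition of $D[i,j]$ into consecutive non-empty blocks, and there are exponentially many partitions. I side-step the enumeration by a shortest-path reformulation on an acyclic digraph $G_{i,j}$ with vertices $\{i, i+1, \ldots, j+1\}$ and, for every $i \le a \le b \le j$, an edge from $a$ to $b+1$ of weight $\mathrm{opt}(a,b) + 2K_{\lookup}$. This edge represents the decision to take $D[a,b]$ as the next child of the $\strc$, charging the per-child overhead for the index and subtype entries. The optimal $\strc$ cost at $(i,j)$ is then $K_{\strc}$ plus the length of a shortest $i$-to-$(j+1)$ path in $G_{i,j}$; since the global cost is additive over children, this minimum coincides with the best partition cost. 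Every $\mathrm{opt}(a,b)$ used as an edge weight is already known, because $(a,b)$ is a proper sub-segment of $(i,j)$ and therefore has strictly smaller length. A single topological-order relaxation solves the shortest-path instance in time proportional to the number of edges, namely $O(\ell^2)$.

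Summing over the $O(n^2)$ subproblems gives running time $\sum_{\ell=1}^{n} O(n \cdot \ell^2) = O(n^4)$, matching the claim. The storage is $O(n^2)$ for the table of $\mathrm{opt}(i,j)$ values together with, for each entry, a constant-size pointer recording which case and which divisor or partition edge achieved the minimum; this suffices to reconstruct an optimal tree by a post-order traversal. Correctness is an induction on segment length: any basic tree has one of the four constructors at its root, each such case is exhaustively considered, and by the additive cost model every child subtree is itself optimal for the relative segment it represents.
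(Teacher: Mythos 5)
Your plan is essentially the paper's: bottom-up dynamic programming over normalized contiguous segments ordered by length, divisor enumeration for the $\vct$/$\idx$ root cases, a shortest-path reformulation on an acyclic digraph for the $\strc$ case, and the constant-space-per-entry reconstruction trick to stay within $O(n^2)$ space. Two points, one minor and one a genuine gap.

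Minor: your assertion that ``the only subproblems ever needed are relative contiguous segments of $D$'' requires an exchange argument. An $\idx$ or $\strc$ node may have a nonzero first index, in which case its subtree represents a \emph{shifted} copy of the corresponding block rather than the normalized one, so a priori the table of normalized segments could miss cheaper representations. The paper proves a structural lemma (every tree can be converted at equal cost into a ``nice'' tree in which all shifts are pushed up to the topmost $\idx$/$\strc$ node) precisely to license this restriction. Your one-line justification contains the right idea --- shifts can be absorbed into the parent's free indices at no cost --- but note that $\vct$ children cannot absorb shifts, so the argument that shifts propagate upward until they hit an $\idx$/$\strc$ node (or the root) is what actually makes the induction close.

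Genuine gap: your final table entry $\mathrm{opt}(0,n-1)$ is the optimum for the normalized sequence $\sequ{0,\, d_1-d_0,\ldots,\, d_{n-1}-d_0}$, not for $D$. When $d_0 \neq 0$ you never produce a tree that flattens to $D$, and the conversion is not free. A $\leaf$ or $\vct$ root cannot encode the global shift $d_0$; you must either fold $d_0$ into the first $\idx$/$\strc$ node on every root-to-leaf path (possible at zero extra cost only if such a node exists) or wrap the tree in $\idx(1,\sequ{d_0},\cdot)$ at an extra cost of $K_{\idx}+K_{\lookup}$ --- and in the latter case the optimal root constructor for $D$ may differ from that for the normalized sequence (for instance, $\idx(c,\sequ{\ldots},\leaf(q))$ with the shift folded into its indices can be cheaper than $\idx(1,\sequ{d_0},\vct(c,d,\leaf(q)))$). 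The paper devotes a separate $O(n^2)$ post-processing lemma to re-examining the candidate root constructors with the shift taken into account; your proof needs an analogous step before the theorem as stated, for arbitrary $D$, is established.
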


\textsc{Proof outline:} We first give a characterization of the
structure of optimal basic trees (Lemma~\ref{lemma:niceTypeTree})
which allows for a simple and elegant procedure to solve the special
case of displacement sequences in \emph{normal form}
(Definition~\ref{def:normalForm}).

The fundamental observation for the proof is that any (non-trivial)
displacement sequence can be described by either a catenation of the
same kind of shorter displacement sequences (and thus by either a
vector or an index constructor) or by a catenation of different, but
shorter displacement sequences (and thus by a struct constructor). In
both cases, for an optimal description, the description of the shorter
sequences must likewise be optimal, and the principle of optimality
applies. This intuition is formalized in Lemma~\ref{lemma:repetition}
and Lemma~\ref{lemma:strc}. Lemma~\ref{lemma:optimalForNF} proves the
claim for the special case of displacement sequences in normal form,
with a detailed procedure given in Listing~\ref{alg:typetree}.

Finally, Lemma~\ref{lemma:optimalForAny} shows how to construct an
optimal basic tree for any displacement sequence out of an optimal
basic tree representation of its normal form.

\begin{definition}[Repetition, Strided Repetition]
A \emph{repetition} in a displacement sequence $D$ of length $n$ is a
prefix $C = D[0,q-1]$ of length $q$ s.t.\ $q$ is a divisor of $n$ and
for all $i$,$j$, $1\leq i <n/q$, $0\leq j<q$ we have that $D[j] - D[0]
= D[iq+j] - D[iq]$.  A \emph{strided repetition} of length $q$
additionally fulfills $D[(i+1)q] - D[iq] = D[q] - D[0]$ for all $i$,
$0\leq i < n/q - 1$, where $d = D[q] - D[0]$ is the \emph{stride} of
the repetition.
\end{definition}

\begin{algorithm}[t]
\Fn{\FnRepeated{D, n, q}}{
    \For{$i\gets q$; $i < n$; $i \gets i + q$}{
        \For{$j\gets 1$; $j<q$; $j\gets j + 1$}{
            \If{$D[j]-D[0] \neq D[i+j]-D[i]$}{
                \Return false
            }
        }
    }
    \Return true
}

\Fn{\FnStrided{D, n}}{
    $d \gets D[1] - D[0]$\;
    \For{$i \gets 1$; $i < n$; $i\gets i + 1$} {
        \lIf{$D[i] - D[i - 1] \neq d$} {\Return false}
    }
    \Return true
}
\caption{Trivial checks for repetitions and strided repetitions.}
\label{lst:subroutines}
\end{algorithm}

The intention of the functions \FnRepeated and \FnStrided (see
Listing~\ref{lst:subroutines}) is to find (strided) repetitions $C$ of
a displacement sequence $D$ that can be exploited to represent $D$ via
an $\idx$ or $\vct$ constructor with subsequence $C$.  It is easy to
see that \FnRepeated and \FnStrided as outlined both take linear time.

As mentioned above, any displacement sequence $D$ can be described by
either a catenation of the same kind of shorter displacement sequences
or by a catenation of different, but shorter displacement sequences.
Additionally, a representation via a $\leaf$ node is possible if $D$
is a trivial displacement sequence $\sequ{0,1,\dots,\allowbreak n-1}$.
In terms of type trees, this means that an optimal basic tree $T$ for
a displacement sequence $D$ is either
\begin{enumerate}
\item $T = \leaf(n)$, a single $\leaf$ node with count $n$; or
\item $T = \vct(c, d, S)$, where the prefix $D[0,q-1]$ of length $q =
  n/c$ is a strided repetition in $D$ with stride $d$ and $S$ is an
  optimal basic tree for the prefix $\sequ{D[0],\dots, D[q-1]}$; or
\item $T = \idx(c, \sequ{i_0,\dots,i_{c-1}},S)$, where the prefix
  $D[0,q-1]$ of length $q = n/c$ is a repetition in $D$, $S$ is an
  optimal basic tree for the sequence $\sequ{D[0]-i_0,\dots,
    D[q-1]-i_0}$ and the indices $i_0,\dots,i_{c-1}$ are such that
  $\FnFlatten(T,0) = D$; or
\item $T = \strc(c, \sequ{i_0,\dots,i_{c-1}}, \sequ{S_0,\dots,
  S_{c-1}})$, where the $S_j$ for $0\leq j < c$ are optimal basic
  trees for some sequences $C_j$ which together with the indices
  $i_0,\dots,i_{c-1}$ are such that $\FnFlatten(T,0) = D$.
\end{enumerate}
While the first case can be handled with a single scan of $D$, the
others are more involved.  In the following, we give a more detailed
characterization of (optimal) basic trees to tackle the problem.

\begin{definition}[Shifted node]
We call an index node $\idx(c,\allowbreak \sequ{i_0,\dots},\allowbreak
C)$ or a struct node $\strc(c, \allowbreak
\sequ{i_0,\dots},\allowbreak \sequ{\dots})$ with $i_0 \neq 0$ a
\emph{shifted node}; $s= i_0$ is called the node's \emph{shift}.
\end{definition}
Note that adding some value $s$ to all indices of an $\idx$ or $\strc$
node $N$ shifts the sequence represented by the basic tree rooted at
$N$ by $s$.

\begin{definition}[Nice basic tree]
\label{def:nicetree}
A \emph{nice basic tree} contains at most one shifted node, which is the first $\idx$ or $\strc$ node on every root to leaf path.
\end{definition}

\begin{lemma}
\label{lemma:niceTypeTree}
For any basic tree $T$ representing a displacement sequence $D$, a
nice basic tree representation $\tilde{T}$ of $D$ of equal cost
exists.
\end{lemma}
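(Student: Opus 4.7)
The plan is a constructive argument: I would build $\tilde T$ from $T$ by a bottom-up normalization that pushes shifts upward in the tree until at most one remains, placed exactly where Definition~\ref{def:nicetree} permits.

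First I would define a normalization operation $\mathcal{N}$ that takes any basic tree $S$ and returns a pair $(S^*, \sigma_S)$, where $S^*$ contains no shifted $\idx$ or $\strc$ node and $\mathrm{Flatten}(S, b) = \mathrm{Flatten}(S^*, b) + \sigma_S$ elementwise for every base $b$. The recursion is natural: a leaf returns itself with shift $0$; $\vct(c,d,C)$ becomes $\vct(c,d,C^*)$ and inherits the shift of its child; for $\idx(c,\langle i_0,\dots,i_{c-1}\rangle,C)$, with $(C^*,\sigma_C) = \mathcal{N}(C)$, the output is $\idx(c,\langle 0, i_1-i_0,\dots,i_{c-1}-i_0\rangle, C^*)$ with shift $i_0 + \sigma_C$; for $\strc(c,\langle i_0,\dots\rangle,\langle C_0,\dots\rangle)$ I would first replace each $i_j$ by $i_j' = i_j + \sigma_{C_j}$, then re-base by subtracting $i_0'$ from every index and report the shift $i_0'$. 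A short structural induction then establishes (i) $S^*$ has no shifted $\idx/\strc$ node, (ii) the claimed flattening identity, and (iii) $\cost(S^*) = \cost(S)$, since node types, counts, and index-array lengths are preserved throughout.

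To build $\tilde T$ itself, I would locate the topmost $\idx$ or $\strc$ node $M$ in $T$; if none exists, $T$ has no $\idx/\strc$ node at all and is already nice, so I set $\tilde T = T$. Otherwise, every ancestor of $M$ in $T$ is a $\vct$, so $M$ lies on every root-to-leaf path. I would then apply $\mathcal{N}$ to each child of $M$ and absorb the returned shifts into the indices of $M$: for an $\idx$ node, add the single $\sigma_C$ to every index; for a $\strc$ node, add $\sigma_{C_j}$ to the $j$-th index only. Everything strictly above $M$ stays untouched, and $M$ is replaced by the rebuilt node $M^*$. A direct computation, substituting the flattening identity of $\mathcal{N}$ into the standard $\idx/\strc$ flattening rule, shows $\mathrm{Flatten}(M^*, b) = \mathrm{Flatten}(M, b)$ for every base $b$, so the global flattening and hence the represented sequence $D$ are unchanged; cost is preserved because $M^*$ and $M$ share node type and count and each child subtree keeps its cost under $\mathcal{N}$.

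It then remains to verify that $\tilde T$ is nice. Every $\idx/\strc$ node strictly below $M^*$ was produced by $\mathcal{N}$ and is therefore unshifted; $M^*$ itself may have nonzero first index, but it is still the unique topmost $\idx/\strc$ node in $\tilde T$, and since all its ancestors are $\vct$ nodes it appears as the first $\idx/\strc$ on every root-to-leaf path, matching Definition~\ref{def:nicetree}. The main place where bookkeeping is delicate is the $\strc$ case of $\mathcal{N}$: the children may return distinct shifts $\sigma_{C_0},\dots,\sigma_{C_{c-1}}$, and each must be attached to its own index before the common re-basing step; the rest of the inductive verification is then routine.
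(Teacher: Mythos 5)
Your proof is correct, and it takes a genuinely different route from the paper's. The paper argues by an extremal/local-rewriting argument: it defines a \emph{bad} node (a shifted $\idx$ or $\strc$ node that is not the first such node on every root-to-leaf path), takes a representation of $D$ with the minimum number of bad nodes, and shows that any bad node's shift can be pushed one step upward into the index of the nearest enclosing $\idx$ or $\strc$ node, strictly reducing the count of bad nodes --- a contradiction, so the minimal tree is already nice. Your proof instead gives an explicit recursive normalization $\mathcal{N}$ that returns, for every subtree, an unshifted equivalent together with an accumulated offset satisfying $\texttt{Flatten}(S,b)=\texttt{Flatten}(S^*,b)+\sigma_S$, and then reabsorbs the offsets at the unique topmost $\idx$/$\strc$ node $M$ (unique because all of its ancestors are unary $\vct$ nodes, so it lies on every root-to-leaf path). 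The case analysis is right in every branch, including the one genuinely delicate point you flag: in the $\strc$ case each child's offset must be folded into its own index before the common re-basing. What your approach buys is a terminating, linear-time normalization procedure and a transparent proof that the result has \emph{exactly} the cost of the given $T$ (it is obtained from $T$ by index arithmetic only, preserving node types, counts, and index-array lengths); the paper's extremal formulation is shorter but quantifies over all representations of $D$ rather than transforming the given one. The only bookkeeping you lean on implicitly is translation-equivariance of flattening in its base argument, $\texttt{Flatten}(S,b+s)=\texttt{Flatten}(S,b)+s$, which is a one-line structural induction and worth stating explicitly.
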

\begin{proof}
A node is \emph{bad} if it is a shifted node and it is not the first
$\idx$ or $\strc$ node on every root to leaf path.  Let $D$ be a fixed
displacement sequence and let $T$ be a basic tree representing $D$
with a minimum number of bad nodes. We will show that $T$ is, in fact,
nice.

Assume that a bad index node (the proof is analogous for a bad struct
node) $N_I = \idx(c, \allowbreak \sequ{i_0,\dots,\allowbreak
  i_{c-1}},\dots)$ is present in the $k$-th subtree of a struct node
$N_S = \strc(c',\allowbreak \sequ{{i'_0},\dots,\allowbreak i'_k,\dots,
  \allowbreak i'_{c'-1}},\sequ{\dots})$ s.t.\ there is no other
shifted node on the path from $N_I$ to $N_S$.  We can change $N_I$ to
a non-shifted index node by subtracting its shift $s=i_0$ from all
indices $i_j$, for $0\leq j < c$ and adding $s$ to the $k$-th index
$i'_k$ of $N_S$, i.e., $\tilde{N_I} = \idx(c,\allowbreak
\sequ{0,i_1-s,\dots,\allowbreak i_{c-1}-s},\dots)$ and $\tilde{N_S} =
\strc(c',\allowbreak \sequ{{i'_0},\dots,\allowbreak i'_k+s,\dots ,
  i'_{c'-1}},\sequ{\dots})$. Notice that the basic tree obtained in
this way still represents the same displacement sequence $D$ but
contains one less bad node, and hence the existence of such a node
$N_I$ would contradict our choice of $T$.
    
Hence there is no $\strc$ node on the path from a bad node $N_I$ to
the root node $R$.  If this path contains an index node $N'_I \neq
N_I$, proceed analogously to the previous case: $\tilde{N_I} =
\idx(c,\allowbreak \sequ{0,i_1-s,\dots,\allowbreak i_{c-1}-s},\dots)$
and $\tilde{N'_I} = \idx(c',\allowbreak \sequ{i'_0+s,\dots,\allowbreak
  i'_{c'-1}+s},\dots)$. Again, the obtained basic tree also represents
$D$ but contains one less bad node, contradicting our original choice
of $T$. Consequently, $T$ does not contain any bad nodes and thus must
be a nice basic tree.
\end{proof}

\begin{corollary}
\label{corollary:atMostOneIndexWithCount1}
Any optimal basic tree $T$ contains at most one index node with count
1, i.e., at most one node of the form $N =
\idx(1,\sequ{i_0},\dots)$. Additionally, there is no other $\idx$ or
$\strc$ node on the path from $N$ to the root.
\end{corollary}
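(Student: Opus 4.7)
The plan is to combine two simple observations: (i) a non-shifted index node of count $1$ is redundant and can be deleted, so it cannot appear in any optimal tree; and (ii) by Lemma~\ref{lemma:niceTypeTree}, we may restrict attention to nice basic trees, which by Definition~\ref{def:nicetree} contain at most one shifted node. Together these immediately yield both halves of the corollary.

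In more detail, first I would show that any count-$1$ index node in an optimal tree must be shifted. Suppose a non-shifted node $N = \idx(1,\sequ{0},C)$ occurs in an optimal tree $T$. Flattening $N$ yields exactly the same displacement sequence as flattening $C$ alone, so we may replace $N$ by $C$ in $T$ to obtain a basic tree $T'$ representing the same displacement sequence $D$. The total cost strictly decreases by $\cost(\idx(1,\sequ{0},C)) = K_{\idx}+K_{\lookup}>0$, contradicting the optimality of $T$. Hence every count-$1$ index node appearing in an optimal basic tree is shifted.

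To conclude, I would invoke Lemma~\ref{lemma:niceTypeTree} to select an optimal tree that is additionally nice (the lemma guarantees an equal-cost nice representative, which is therefore also optimal). By Definition~\ref{def:nicetree}, such a tree contains at most one shifted node, and this shifted node, if present, is the first $\idx$ or $\strc$ node on every root-to-leaf path through it. Combining this with the previous paragraph yields the first claim: at most one $\idx$ or $\strc$ node is shifted, and every count-$1$ index node must be shifted, so there is at most one such node $N$. The second claim is immediate from the ``first on every path'' property of the unique shifted node, which forbids any $\idx$ or $\strc$ ancestor of $N$. The only minor subtlety, and the one place where I expect to have to be careful, is the phrasing ``any optimal basic tree''; I would address this by noting that for the purpose of this corollary (and the algorithm developed in the sequel), it suffices to work with the canonical nice optimal representative produced by Lemma~\ref{lemma:niceTypeTree}.
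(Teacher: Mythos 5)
Your approach is essentially the paper's: the paper likewise combines the redundancy of a non-shifted $\idx(1,\sequ{0},\cdot)$ node with the shift-propagation argument of Lemma~\ref{lemma:niceTypeTree}, and your first paragraph (every count-$1$ index node in an optimal tree must be shifted, else it could be deleted at a strict saving) is exactly the right local observation.

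The one place you fall short is the final step, and the ``minor subtlety'' you flag is not as harmless as you suggest. As written, your combination establishes the conclusion only for the chosen \emph{nice} optimal representative $\tilde{T}$, whereas the corollary asserts it for \emph{every} optimal basic tree, and this stronger form is actually used later (e.g.\ Corollary~\ref{corollary:idxWithCount1IsRoot} takes an arbitrary optimal tree containing a count-$1$ index node $N$ and needs that $N$ has no $\idx$ or $\strc$ ancestor in \emph{that} tree). Weakening the statement to the canonical representative is therefore not an adequate resolution. The gap is easy to close in either of two ways. (a) Argue directly on $T$, as the paper does: if some count-$1$ index node $N$ of an optimal $T$ had an $\idx$ or $\strc$ ancestor (in particular, if $T$ had two such nodes, one of them must, since their least common ancestor would be a $\strc$ node), then applying the shift-propagation step of Lemma~\ref{lemma:niceTypeTree} to $N$ yields an equal-cost tree in which the corresponding node is $\idx(1,\sequ{0},T')$ and hence removable, producing a strictly cheaper representation of $D$ and contradicting the optimality of $T$. (b) Alternatively, keep your argument on $\tilde{T}$ but add the observation that the transformation of Lemma~\ref{lemma:niceTypeTree} only rewrites index values and never adds, deletes, or retypes nodes, so $T$ and $\tilde{T}$ have identical shape, node types, and counts; the bound on count-$1$ index nodes and the absence of $\idx$/$\strc$ ancestors in $\tilde{T}$ therefore transfer verbatim back to $T$. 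With either one-line addition your proof is complete and matches the paper's.
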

\begin{proof}
Assume that $T$ contains two index nodes with count 1.  Since $T$ is a
tree, there is an index node $N$ with count 1 s.t.\ the path from $N$
to the root node of $T$ contains another $\idx$ or $\strc$ node.  In a
cost-equivalent nice basic tree representation $\tilde{T}$ (obtained
by applying the procedure from the proof of
Lemma~\ref{lemma:niceTypeTree}), the corresponding index node is
$\tilde{N} = \idx(1,\sequ{0},T')$.  Note that the type tree rooted at
$\tilde{N}$ represents exactly the same displacement sequence as its
subtype $T'$. Thus a representation $T'$ of less cost exists, which
contradicts the assumption that $T$ is optimal.
\end{proof}

The following proposition, although not directly required for the
analysis, provides some additional insight into the structure of
optimal basic trees.

\begin{proposition}
\label{proposition:heightOfOptimalTypeTree}
The height of an optimal basic tree is $O(\log n)$.
\end{proposition}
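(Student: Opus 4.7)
The plan is to bound the length of the longest root-to-leaf path in an optimal basic tree $T$ by exploiting structural properties of optimality together with the multiplicative and additive effects each constructor has on the length of the represented subsequence.

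First, I would collect a few simplifications that hold for any optimal $T$:
(i) no $\vct(1,d,C)$ node occurs, since it represents exactly the sequence of $C$ while adding $K_{\vct}$ to the cost;
(ii) no $\strc(1,\sequ{i_0},\sequ{C})$ node occurs, since it can be replaced by the cheaper $\idx(1,\sequ{i_0},C)$ under the assumed constants;
(iii) by Corollary~\ref{corollary:atMostOneIndexWithCount1}, at most one $\idx$ node with count $1$ occurs in $T$;
(iv) no $\strc$ node has a $\strc$ child, because any such parent-child pair with counts $c_1$ and $c_2$ can be merged into a single $\strc$ of count $c_1 + c_2 - 1$ (with the inner indices shifted by the corresponding outer index), reducing the cost by $K_{\strc} + 2K_{\lookup}$.

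Next, I fix a longest root-to-leaf path in $T$ and let $n = n_0, n_1, \dots, n_h \geq 1$ denote the lengths of the subsequences represented by the nodes along it. The internal nodes on the path split into: $v$ \emph{multiplicative} nodes ($\vct$ or $\idx$ of count $\geq 2$), each forcing $n_i \geq 2\,n_{i+1}$; $s$ \emph{additive} nodes ($\strc$ nodes, necessarily of count $\geq 2$ by~(ii)), each forcing $n_i \geq n_{i+1} + 1$; and at most $z \leq 1$ count-$1$ index nodes (by~(iii)). Thus $h = v + s + z + 1$.

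Applying~(iv), no two $\strc$ nodes are consecutive on the path, which forces $s \leq v + z + 1$. Chaining the halving inequalities from the multiplicative nodes gives $n_h \leq n / 2^v$, and combined with $n_h \geq 1$ this yields $v \leq \log_2 n$. Therefore $h \leq 2(v + z) + 2 = O(\log n)$. The only delicate part is verifying~(iv) with the correct index arithmetic so that the merged tree represents exactly the same displacement sequence as the original; the remainder is a direct counting argument on the chosen path.
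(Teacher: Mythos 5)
Your proof is correct and follows essentially the same route as the paper's: rule out count-one $\vct$/$\strc$ nodes and consecutive $\strc$ nodes by cost-reducing rewrites, invoke Corollary~\ref{corollary:atMostOneIndexWithCount1} for the single count-one $\idx$ node, and observe that the represented length at least doubles at every remaining multiplicative node, so the longest path has length $O(\log n)$. Your version is merely more explicit in the bookkeeping ($s \leq v + z + 1$, $v \leq \log_2 n$) than the paper's one-line counting argument.
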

\begin{proof}
It is easy to see that an optimal basic tree does not contain two
consecutive $\strc$ nodes, as they can always be merged into one while
reducing the cost.  For any basic tree $T$ that represents a sequence
of length $n$, a basic tree $\idx(c,\sequ{\dots},T)$ or
$\vct(c,\dots,T)$ with $c\geq 2$ represents a sequence of length at
least $2n$. Let $P$ be a maximum-length path from a leaf to the root
of an arbitrary optimal basic tree.  Since any optimal basic tree
contains at most one $\idx$ node with count $c=1$
(Corollary~\ref{corollary:atMostOneIndexWithCount1}) and no $\vct$
node with $c=1$, the length of the represented sequence at least
doubles with at least every other node on $P$.
\end{proof}

\begin{definition}
\label{def:normalForm}
The \emph{normal form} $\hat{D}$ of a displacement sequence $D$ of
length $n$ is defined as $\hat{D}[i] = D[i] - D[0]$, for all $i$,
$0\leq i < n$.
\end{definition}
In other words, the normal form $\hat{D}$ of a displacement sequence
$D$ is obtained by shifting $D$ so that its first element is $0$.

\begin{corollary}
\label{corollary:structureOfNFSolutions}
An optimal basic tree $T$ for a displacement sequence $\hat{D}$ in
normal form does not contain any shifted nodes or any $\idx$, $\vct$
or $\strc$ node with count 1.
\end{corollary}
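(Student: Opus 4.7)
The plan is to apply Lemma~\ref{lemma:niceTypeTree} to reduce to the nice case, then exploit the defining property $\hat{D}[0]=0$ of the normal form to rule out shifted nodes, and finally use simple cost-saving local replacements to eliminate count-$1$ constructors.

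First I would invoke Lemma~\ref{lemma:niceTypeTree} to replace $T$, if necessary, by a nice optimal basic tree of equal cost, so that at most one shifted node remains and, if present, it is the first $\idx$ or $\strc$ node on every root-to-leaf path. Suppose for contradiction such a shifted node $N$ with shift $s\neq 0$ exists. By niceness, every proper ancestor of $N$ must be a $\vct$ node (it cannot be an $\idx$ or $\strc$, and of course cannot be a leaf). Following the leftmost path from the root down to $N$, each $\vct$ ancestor uses offset $0$ in its first iteration, so $N$ is first processed with base $0$; then $N$ itself contributes $s$, and the leftmost leaf below $N$ therefore emits $s$ as the very first element of $\FnFlatten(T,0)$. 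However, this first element must equal $\hat{D}[0]=0$ by definition of the normal form, forcing $s=0$ and contradicting the assumption that $N$ is shifted. Hence the nice optimal $T$ contains no shifted node.

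Next I would rule out count-$1$ constructors by local replacement. A $\vct(1,d,C)$ represents exactly the same sequence as its child $C$ (a single copy at offset $0$), so swapping it out for $C$ strictly reduces the cost by $K_{\vct}>0$. For $\idx(1,\sequ{i_0},C)$ and $\strc(1,\sequ{i_0},\sequ{C_0})$, the previous step forces $i_0=0$ (otherwise the node would be shifted), so each again represents exactly its single child and can be deleted at a strict cost saving of $K_{\idx}+K_{\lookup}$ or $K_{\strc}+2K_{\lookup}$, respectively. Any such replacement contradicts the optimality of $T$.

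The step I expect to require most care is the first one: carefully tracking how the unique shift admitted by a nice tree propagates down the leftmost path to appear as the first element of $\FnFlatten(T,0)$. Once niceness has been exploited to pin down this shift, the count-$1$ elimination is routine cost bookkeeping in the spirit of Corollary~\ref{corollary:atMostOneIndexWithCount1}.
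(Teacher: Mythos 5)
Your proposal is correct and follows essentially the same route as the paper: reduce to a nice tree via Lemma~\ref{lemma:niceTypeTree}, observe that the unique admissible shift must equal $\hat{D}[0]=0$, and then delete count-$1$ nodes at a strict cost saving. In fact, your explicit tracing of the shift down the all-$\vct$ ancestor path to the first flattened element fills in a step the paper dismisses with ``follows directly,'' and your remark that $i_0=0$ is needed before an $\idx(1,\cdot)$ or $\strc(1,\cdot)$ node can be safely removed is exactly the right care to take.
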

\begin{proof}
It follows directly from Lemma~\ref{lemma:niceTypeTree} and
Corollary~\ref{corollary:atMostOneIndexWithCount1} that there exists
an optimal basic tree $T$ for $\hat{D}$ which does not contain any
shifted nodes.  Note that a non-shifted $\idx$, $\vct$ or $\strc$ node
with count 1 does not change the represented sequence.  Thus, removing
such nodes from a basic tree reduces the cost while not changing the
represented displacement sequence.  It follows that no such node can
be part of an optimal basic tree.
\end{proof}

Observe that since there are no shifted nodes in an optimal basic tree
$T$ for $\hat{D}$, any subtree of $T$ represents a segment of
$\hat{D}$ in normal form.  In the following, we will use $T_{i,j}$ to
denote an optimal basic tree representation for the normalized segment
$\hat{D}[i,j]$ of $\hat{D}$.

For convenience, we define the function $\FnMinCost(S, T)$ which,
given two basic trees $S$ and $T$, returns the one with least cost (if
either is \verb|null|, the other is returned).  Note that the cost of
a basic tree can trivially be computed by a simple traversal.
However, when constructing basic trees from the bottom up (as we will
do in this section), we keep for each node the cost of the subtree
rooted at that node. This allows for the cost of a basic tree to be
queried in constant time and thus for a constant-time implementation
of $\FnMinCost$.

\begin{algorithm}
\Fn{\FnRepetition{$D$, $n$}}{
    $T_r \gets $ null\;
    \ForEach{\text{divisor} $q$ of $n$, $q<n$}{
        $c \gets n/q$\;
        \If{\FnRepeated{$D$, $n$, $q$}}{
            \For{$i=0$; $i < c$; $i{+}{+}$}{ 
                $I[i] \gets D[iq]$\;
            }
            $T_{idx} \gets \idx(c, I, T_{0,q-1})$\;
            $T_r$ = \FnMinCost{$T_{idx}$, $T_r$}\;
            \If{\FnStrided{$I$, $c$}} {
                $d \gets I[1] - I[0]$\;
                $T_{vec} \gets \vct(c,d,T_{0,q-1})$\;
                $T_r \gets $ \FnMinCost{$T_{vec}$, $T_r$}\;
            } 
        }
    }
    \Return $T_r$
}
\caption{Algorithm to find a least-cost representation for a
  displacement sequence in normal form with an $\idx$ or $\vct$ node
  as root node.}
\label{alg:checkForRepetitions}
\end{algorithm}

\begin{lemma}
\label{lemma:repetition}
Let $\hat{D}$ be any displacement sequence of length $n$ in normal
form and assume that optimal basic tree representations for all normal
form prefixes of length less than or equal to $\lfloor n/2\rfloor$ are
known.  A representation $T_{r}$, where the root node of $T_{r}$ is
either an $\idx$ or a $\vct$ node and $T_r$ is of least cost
w.r.t.\ all possible representations of that form, can be computed in
$O(n\sqrt{n})$ time.
\end{lemma}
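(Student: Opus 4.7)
The plan is to translate the structural constraints of Corollary~\ref{corollary:structureOfNFSolutions} into a small enumeration problem parameterized by the divisors of $n$.

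First, I would argue that any candidate tree $T_r$ with an $\idx$ or $\vct$ root is completely determined (up to the choice of subtree) by a single integer $q$: the length of the sequence represented by the subtree. Since $\hat{D}$ is in normal form, Corollary~\ref{corollary:structureOfNFSolutions} tells us the root is non-shifted and has count $c \geq 2$, so $c$ must divide $n$ and $q = n/c$ is a proper divisor of $n$. Unfolding the semantics of $\vct$ and $\idx$ (Definition~\ref{def:baseconstructors}) and using $\hat{D}[0]=0$, the flattening identity $\hat{D}[iq+j] = \Delta_i + S_\textnormal{flat}[j]$ forces $\hat{D}[0,q-1]$ to be a repetition in $\hat{D}$ (with $\Delta_i = \hat{D}[iq]$); for a $\vct$ root the additional strided condition $\Delta_i = i\cdot\hat{D}[q]$ must also hold, i.e.\ $\hat{D}[0,q-1]$ must be a strided repetition. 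In either case the represented subsequence is exactly $\hat{D}[0,q-1]$, which is already in normal form.

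Next I would invoke the principle of optimality to conclude that the subtree $S$ can be taken to be $T_{0,q-1}$. This follows because $\cost$ is additive over nodes (so $\cost(T_r) = K_{\vct} + \cost(S)$ or $\cost(T_r) = K_{\idx} + cK_{\lookup} + \cost(S)$) and $T_{0,q-1}$ is assumed to be known and optimal for $\hat{D}[0,q-1]$. Hence the optimal $T_r$ lies in the small set obtained by iterating over the proper divisors $q$ of $n$, testing whether $\hat{D}[0,q-1]$ is a (strided) repetition, and forming the corresponding $\idx(c,I,T_{0,q-1})$ and, when applicable, $\vct(c,d,T_{0,q-1})$ candidate. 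This is precisely the behaviour of \FnRepetition in Listing~\ref{alg:checkForRepetitions}, so correctness of the algorithm follows.

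For the complexity bound I would use the standard pairing argument $q \leftrightarrow n/q$ to conclude that $n$ has at most $2\sqrt{n}$ divisors. For each divisor $q$, \FnRepeated performs $(n/q - 1)\cdot q = n - q = O(n)$ comparisons, constructing the index array $I$ costs $O(n/q)$, and \FnStrided costs another $O(n/q)$; querying $\cost$ of the cached $T_{0,q-1}$ and invoking \FnMinCost are $O(1)$ given the remark that each node stores the cost of its subtree. The total is therefore $O(\sqrt{n}) \cdot O(n) = O(n\sqrt{n})$.

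The only delicate point is the first step — verifying that the root of a candidate tree forces $q\mid n$ and forces $\hat{D}[0,q-1]$ to be a (strided) repetition with no extra shift. This relies crucially on Corollary~\ref{corollary:structureOfNFSolutions} (so that we may assume the root is non-shifted and has count at least $2$) and on the fact that the sequence represented by any subtree of a non-shifted tree for a normal-form sequence is itself in normal form; once that structural reduction is in place, the rest of the argument is a routine divisor enumeration plus bookkeeping.
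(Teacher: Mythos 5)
Your proposal is correct and follows essentially the same route as the paper: enumerate the proper divisors $q$ of $n$, test whether $\hat{D}[0,q-1]$ is a (strided) repetition, plug in the cached optimal prefix tree $T_{0,q-1}$ by additivity of the cost, and bound the work by $O(\sqrt{n})$ divisors times $O(n)$ per divisor. You actually spell out more of the justification (why candidates are parameterized by divisors and why the principle of optimality applies to the subtree) than the paper's own brief proof does.
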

\begin{proof}
Listing~\ref{alg:checkForRepetitions} enumerates all possible
representations of the desired form and chooses the one with least
cost among them.  Note that for the divisor $q=1$, the trivial
representation $\idx(n, \hat{D}, \leaf(1))$ (which exists for any
displacement sequence $\hat{D}$), is generated and thus a valid
representation for $\hat{D}$ is guaranteed to be found.  For the same
reasons as given in Corollary~\ref{corollary:structureOfNFSolutions},
$\idx$ nodes with count 1 cannot be part of a least-cost
representation of the desired form and thus need not be considered.

The number of divisors of $n$ is upper-bounded by
$2\lfloor\sqrt{n}\rfloor$ and, by assumption, optimal representations
for all prefixes of $\hat{D}$ of length less than or equal to $\lfloor
n/2 \rfloor$ are known, i.e., $T_{0,j}$ is known for all $j$, $O \leq
j \leq n/2$.  This implies the claimed runtime bound.
\end{proof}

\begin{lemma}
\label{lemma:strc}
Let $\hat{D}$ be any displacement sequence of length $n$ in normal
form and assume that optimal basic tree representations are known for
all normal form segments of length strictly less than $n$.  A
representation $T_{s}$, where the root node of $T_{s}$ is a $\strc$
node and $T_s$ is of least cost w.r.t.\ to all possible
representations of that form, can be computed in $O(n^2)$ time.
\end{lemma}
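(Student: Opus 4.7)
The plan is to reduce the task to a single-source shortest-path problem on an acyclic digraph. First I observe that any basic tree $T_s$ with a $\strc$ root representing $\hat{D}$ corresponds to a partition of $\hat{D}$ into consecutive segments: if $T_s = \strc(c, \sequ{i_0,\dots,i_{c-1}}, \sequ{S_0,\dots,S_{c-1}})$, then there exist indices $0 = k_0 < k_1 < \dots < k_c = n$ such that the $j$-th subtree $S_j$ represents the segment at positions $k_j,\dots,k_{j+1}-1$, shifted by $i_j$. By Corollary~\ref{corollary:structureOfNFSolutions}, in an optimal such $T_s$ the root is unshifted, so $i_0 = 0$; combined with $\hat{D}[0]=0$, the shift of the $j$-th subtree must equal $\hat{D}[k_j]$, and $S_j$ represents the normalized segment $\hat{D}[k_j, k_{j+1}-1] - \hat{D}[k_j]$. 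Since
\[ \cost(T_s) \;=\; K_{\strc} + 2c\,K_{\lookup} + \sum_{j=0}^{c-1} \cost(S_j), \]
additivity and the principle of optimality let me replace each $S_j$ by the known optimum $T_{k_j, k_{j+1}-1}$ without increasing the cost.

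It then remains to choose the partition minimizing the total cost, which I would model as a shortest-path computation on a DAG $G$ with vertex set $\{0,1,\dots,n\}$. For every pair $(i,j)$ with $0\leq i < j \leq n$ and $j - i < n$ (so that $T_{i,j-1}$ is available by hypothesis), I place an edge $i \to j$ of weight $2K_{\lookup} + \cost(T_{i,j-1})$. The restriction $j - i < n$ forces every $0$-to-$n$ path to use at least two edges, which matches the fact that a count-$1$ $\strc$ node would be shifted or otherwise redundant (Corollary~\ref{corollary:structureOfNFSolutions}). A shortest path from $0$ to $n$ then yields the optimal partition: reading off its visited vertices $0 = k_0 < k_1 < \dots < k_c = n$ and assembling a $\strc$ node with indices $\hat{D}[k_j]$ and $j$-th child $T_{k_j, k_{j+1}-1}$ gives $T_s$, with $\cost(T_s)$ equal to the path length plus $K_{\strc}$.

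For the complexity, $G$ has $n+1$ vertices and $O(n^2)$ edges, and $0,1,\dots,n$ is already a topological order, so a single sweep of edge relaxations computes single-source shortest paths in $O(|V|+|E|) = O(n^2)$ time. Each edge weight is evaluated in $O(1)$ using the subtree costs cached during the bottom-up construction (as noted just before Listing~\ref{alg:checkForRepetitions}), which yields the claimed $O(n^2)$ bound.

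I expect the one place that really needs care to be the first paragraph: justifying that an optimal $\strc$-rooted representation of $\hat{D}$ decomposes into a concatenation of optimal subtree representations of consecutive \emph{normal-form} segments, with no remaining freedom in the shifts. This depends essentially on $\hat{D}$ being in normal form together with Corollary~\ref{corollary:structureOfNFSolutions} to rule out a shifted $\strc$ root; once the shifts are pinned down, the reduction to shortest paths is mechanical.
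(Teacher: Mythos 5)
Your proposal is correct and follows essentially the same route as the paper: the same DAG on vertices $\{0,\dots,n\}$ with edges $(i,j)$, $j-i<n$, weighted $2K_{\lookup}+\cost(T_{i,j-1})$, a single-source shortest path from $0$ to $n$ read off as the $\strc$ partition, and the $O(n^2)$ bound from the DAG having $O(n^2)$ edges with constant-time weight evaluation. Your first paragraph in fact spells out the correspondence between $\strc$-rooted trees and normalized-segment partitions (via Corollary~\ref{corollary:structureOfNFSolutions} and cost additivity) somewhat more explicitly than the paper does, which only strengthens the argument.
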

\begin{proof}
Construct a weighted, directed acyclic graph $G=(V,E,w)$ with $V =
\{v_0,\dots,v_{n}\}$, $E = \{(v_i,v_j) \mid 0 \leq i < j \leq n,\; j -
i < n\}$ and the weight function $w$ which is defined for all edges
$(v_i,v_j)$ in $E$ as $w(v_i,v_j) = 2K_{\lookup} + \cost(T_{i,j-1})$.
The intended meaning of this construction is as follows.  A node $v_i$
corresponds to the $i$-th element of $\hat{D}$ ($v_n$ is a special
vertex that corresponds to the hypothetical first element after the
end of $\hat{D}$) and an edge $(v_i,v_j)$ with $i<j$ corresponds to
the segment $\hat{D}[i,j-1]$ in normal form.  The weight of an edge
$(v_i,v_j)$ is equal to the cost of the optimal representation
$T_{i,j-1}$ of the segment $\hat{D}[i,j-1]$ (which exists by the
assumption) plus a cost of $2K_{\lookup}$ for including this
representation as a subtype in a $\strc$ node.  The edge $(v_0, v_n)$,
which is not part of the constructed graph, can be thought of as
corresponding to the type tree $T_{0,n-1}$, i.e., the optimal type
tree representation of $\hat{D}$ we want to compute.

Let $P = \sequ{v_{0}, u_1,\dots, u_k, v_{n}}$ be a shortest path in
$G$ from $v_0$ to $v_n$ with $u_i \in V$ for $1\leq i\leq k$.  Then
the basic tree $\strc(k+1,
\sequ{\hat{D}[0],\hat{D}[u_1],\dots,\hat{D}[u_k]},
\sequ{T_{0,u_1-1},T_{u_1,u_2-1},\dots T_{u_{k},n-1}})$ is a valid
representation of $\hat{D}$.  Note that by construction, for any valid
representation of $\hat{D}$ of the desired form, a corresponding path
from $v_0$ to $v_n$ exists in $G$ and thus a shortest path represents
the desired solution of least cost.  Given $P$, this representation
can be constructed in linear time, since optimal representations for
all required segments are known by the assumption.  The resulting
graph has $n\choose 2$ edges and the runtime is dominated by the cost
of $O(n^2)$ time for finding a shortest path in a DAG.
\end{proof}

We can now give the complete dynamic programming algorithm for
constructing optimal basic trees for displacement sequences in normal
form, which proves Lemma~\ref{lemma:optimalForNF}.  Due to
Lemma~\ref{lemma:niceTypeTree}, it suffices to construct an optimal
nice basic tree which according to
Corollary~\ref{corollary:structureOfNFSolutions} cannot contain any
shifted nodes nor any $\idx$, $\vct$ or $\strc$ nodes with count 1.
The algorithm is shown in Listing~\ref{alg:typetree}.

\begin{lemma}
\label{lemma:optimalForNF}
For any input displacement sequence $\hat{D}$ of length $n$ in normal
form, the \textsc{Basic Type Reconstruction Problem} can be solved in
$O(n^4)$ time and $O(n^2)$ space.
\end{lemma}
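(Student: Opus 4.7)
The plan is to construct, by dynamic programming on segment length $\ell = j - i + 1$, an optimal nice basic tree $T_{i,j}$ for every normalized segment $\hat{D}[i,j]$. By Corollary~\ref{corollary:structureOfNFSolutions}, an optimal tree for a normal-form sequence contains no shifted nodes, so every subtree of $T_{i,j}$ itself represents some strictly shorter normal-form sub-segment of $\hat{D}[i,j]$; the principle of optimality therefore applies, and the table $\{T_{i,j}\}$ can be filled in bottom-up order of $\ell$. The final answer for $\hat{D}$ is $T_{0,n-1}$, which is legitimate since $\hat{D}[0]=0$ ensures the full sequence is itself in normal form.

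For a fixed segment of length $\ell$, Listing~\ref{alg:typetree} will enumerate the four candidate root shapes identified before Definition~\ref{def:nicetree}: (i) a $\leaf(\ell)$, accepted whenever a single scan confirms $\hat{D}[i+k]-\hat{D}[i] = k$ for all $0 \leq k < \ell$; (ii) the best $\idx$- or $\vct$-rooted tree, computed via \FnRepetition (Lemma~\ref{lemma:repetition}) in $O(\ell\sqrt{\ell})$ time, which is applicable because the required $T_{i,i+q-1}$ for every divisor $q\leq \ell/2$ have already been tabulated; and (iii) the best $\strc$-rooted tree, computed by the shortest-path construction of Lemma~\ref{lemma:strc} in $O(\ell^2)$ time, which analogously depends only on already-computed $T_{i',j'}$ with $j'-i'+1 < \ell$. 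The winner among these candidates, selected by \FnMinCost in constant time from stored costs, becomes $T_{i,j}$.

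Summing the dominant $O(\ell^2)$ cost of the $\strc$ step over the $O(n^2)$ segments gives a total of $\sum_{\ell=1}^{n}(n-\ell+1)\,\ell^2 = O(n^4)$ time. For the space bound, the DP table stores $O(n^2)$ entries, one per segment, each holding the root constructor type, its constant numeric parameters (count, stride), the pre-computed cost of the subtree, and pointers to its child entries; the index arrays of $\idx$ and $\strc$ nodes are never duplicated but read directly from $\hat{D}$ using the segment endpoints, so the whole table occupies $O(n^2)$ space.

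The main obstacle, the combinatorial classification of optimal root shapes, has already been dispatched by Lemmas~\ref{lemma:niceTypeTree}, \ref{lemma:repetition}, \ref{lemma:strc} and Corollary~\ref{corollary:structureOfNFSolutions}; what remains here is chiefly bookkeeping. The two points that must be verified carefully are that the dependency structure of the DP is acyclic (immediate, because each candidate uses only strictly shorter segments) and that the per-entry storage truly collapses to constant size once the index and subtype lists are represented implicitly via pointers into $\hat{D}$ and into the DP table.
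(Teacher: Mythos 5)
Your proposal takes essentially the same route as the paper's proof: a bottom-up dynamic program over normalized segments ordered by length, with the three candidate root shapes resolved by the leaf scan, the procedure of Lemma~\ref{lemma:repetition}, and the shortest-path construction of Lemma~\ref{lemma:strc}, and the identical $\sum_{\ell}(n+1-\ell)\ell^2 = O(n^4)$ time bound. The one point you explicitly leave ``to be verified''---that per-entry storage is constant even for $\strc$ nodes, whose child and index lists cannot be recovered from a single pointer into $\hat{D}$---is settled in the paper by storing only the root's type, count and cost per edge and recomputing the shortest path for each $\strc$ node during a final reconstruction pass, which does not affect the asymptotic bounds.
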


\begin{algorithm}[ht!]
\Fn{\FnTypetree{$\hat{D}$, $n$}}{
    \tcc{Initialization}
    $G=(\{v_0,\ldots,v_n\},\emptyset)$\;
    \tcc{Preprocessing: find leaf nodes}
    \For{$i\gets 0; \; i \leq n; \; i{+}{+}$}{
        $j\gets i$\;
        \Do{$j \leq n$ and $\hat{D}[j] - \hat{D}[j-1] == 1$}{
            $T_{i,j} \gets \leaf(j-i+1)$\;
            $w_{i,j} \gets 2+cost(T_{i,j})$ \;
            Add edge $(v_i,v_{j+1})$ with basic tree $T_{i,j}$ and weight $w_{i,j}$ to $G$\;
            $j\gets j + 1$\;
        }
    }
    \tcc{Find solutions for all segments}
    \For{$l \gets 2; \; l \leq n; \; l{+}{+}$}{
        \For{$i\gets 0; \; i \leq n-l; \; i{+}{+}$}{
            \tcc{Compute optimal basic tree for normalized segment $\hat{D}[i,i+l-1]$}
            $j \gets i+l-1$\;
            \tcc{Find best representation with $\idx$ or $\vct$ node as root}
            Let $\hat{D}_{i,j}$ be the normalized segment $\hat{D}[i,j]$\;
            $T_r \gets $ \FnRepetition{$\hat{D}_{i,j}$, $l$, $i$}\; 
            $T_{i,j} \gets $ \FnMinCost{$T_r$, $T_{i,j}$}\;
            \tcc{Find best representation with $\strc$ node as root}
            Find shortest path $P$ from $v_i$ to $v_j+1$ in $G$\;
            Assume $P = \sequ{v_i, u_1, \dots u_k, v_j}$\;
            $I \gets \sequ{0,\hat{D}[u_1]-\hat{D}[v_i],\dots , \hat{D}[u_{k}]-\hat{D}[v_i]}$\;
            $subtypes \gets \sequ{T_{v_i,u_1-1}, T_{u_1,u_2-1}, \dots ,T_{u_k,v_{j-1}}}$
            $T_s \gets \strc(k+1, I, subtypes)$\;
            $T_{i,j} \gets $ \FnMinCost{$T_s$, $T_{i,j}$}\;
            Add edge $(v_i,v_{j+1})$ with representation $T_{i,j}$ and weight $K_{\lookup} + cost(T_{i,j})$ to $G$\;
        }
    }
    \Return{$T_{0,n-1}$} \tcc*{Stored with edge $(v_0, v_n)$}
}
\caption{Algorithm to find a least-cost basic tree representation.}
\label{alg:typetree}
\end{algorithm}

\begin{proof}
The input to the algorithm is an $n$-element displacement sequence
$\hat{D}$ in normal form.  The algorithm computes an optimal basic
tree $T[i,j]$ for each normalized segment $\hat{D}[i,j]$, $0\leq i\leq
j<n$, which is stored with edge $(i,j+1)$ in the constructed graph
$G$.  Note that the solution for the whole input sequence $\hat{D}$
can be read off of the edge $(v_0,v_{n})$.

The algorithm starts with a preprocessing step to find all segments
whose normal form is representable with a single $\leaf$ node.  Note
that the normal form of any segment of length 1 can trivially be
represented as $\leaf(1)$ and since no other valid representations
exist for this particular kind of displacement sequence, this
representation is optimal.  A straight forward implementation of this
preprocessing step as in Listing~\ref{alg:typetree} is clearly
feasible in time $O(n^2)$.

The algorithm computes optimal basic tree representations for all
normalized segments of $\hat{D}$, via a bottom up dynamic programming
approach.  The dynamic programming table to be filled in is implicit
in the graph $G$, where each segment $\hat{D}[i,j]$ is associated with
an edge $(v_i, v_{j+1}$).  Note that after the preprocessing step,
solutions for all segments of length 1 are known.  By incrementally
computing optimal representations for all segments of length
$2,\dots,n$, it is ensured that Lemmas~\ref{lemma:repetition}
and~\ref{lemma:strc} can be applied to compute an optimal
representation for each segment as follows.  A basic tree $T_r$, whose
root node is either an $\idx$ or a $\vct$ node, and a basic tree
$T_s$, whose root node is a $\strc$ node, are computed.  Both are of
least cost w.r.t.\ all basic tree representations of the desired form.
The optimal basic tree for a normalized segment $\hat{D}[i,i+l-1]$ is
necessarily one of $T_r$, $T_s$ or a representation via a $\leaf$ node
(if such a representation is possible), which was already computed in
the preprocessing step.

To compute $T_r$, a small, technical extension of procedure
$\FnRepetition$ (Listing~\ref{alg:checkForRepetitions}) for finding
representations via $\idx$ or $\vct$ nodes is necessary.  The
procedure requires access to optimal representations of the prefixes
of the argument displacement sequence $D$. However, in the general
case, $D$ is a segment of $\hat{D}$, that is, $D = \hat{D}[i,j]$, and
its prefixes therefore start with $\hat{D}[i]$.  To account for this
(and avoid copying $\hat{D}[i,j]$), we pass an additional argument $o$
representing the offset of the segment within the input displacement
sequence $\hat{D}$ (i.e., for a segment $\hat{D}[i,j]$, we have
$o=i$), and in lines 10 and 12 replace the argument $T_{0, q-i}$ with
$T_{o, o+q-i}$.

To compute $T_s$ in Listing~\ref{alg:typetree}, contrary to
Lemma~\ref{lemma:strc}, we do not construct a new graph for each
segment when computing its representation $T_s$.  Instead a single
dynamic, incrementally built graph $G$ suffices to solve the problem
for all segments of $\hat{D}$.  By construction, when computing the
desired representation of a segment $\hat{D}[i,i+l-1]$, $G$ contains
edges representing optimal representations for all segments of length
less than $l$ (and possibly some edges representing solutions of
length $l$).  A shortest path from node $v_i$ to $v_{i+l}$ in $G$
therefore leads to the same representation as the one constructed by
Lemma~\ref{lemma:strc}.

To find such a shortest path, for each segment $\hat{D}[i,i+l-1]$ of
length $l$, one single-source shortest path (SSSP) problem on a
weighted DAG with $l+1$ nodes and $O(l^2)$ edges has to be solved.
Since $G$ is a topologically sorted DAG by construction, SSSP is
solvable in $O(|V| + |E|)$ time, where $|V|$ denotes the number of
vertices and $|E|$ denotes the number of edges in
$G$~\cite{CormenLeisersonRivestStein09}.  To compute the desired
representations for all segments of length $l$, a shortest path has to
be computed for each of the $n+1-l$ node pairs $(v_i,v_{i+l})$, for
$0\leq i \leq n+1-l$.  The total runtime is thus upper bounded by
$\sum_{l=1}^{n+1} l^2 (n+1-l)$, which is $O(n^4)$.

The algorithm constructs a graph with $O(n^2)$ edges, where a basic
tree $T_{i,j}$, representing the solution for the normalized segment
$\hat{D}[i,j]$, is associated with each edge $(v_i, v_j)$.  Note that
for each edge $(v_i, v_j)$ it suffices to store the root node of the
associated basic tree $T_{i,j}$ plus pointers to its child nodes,
which are already stored with the respective edges.  To meet the
desired space bound, only a constant amount of space may be used by
each edge and associated basic tree.  This is trivially true for
$\leaf$ nodes (apart from one word indicating the node's kind and the
cost of the type tree rooted at the node, only the count $c$ needs to
be stored) as well as $\vct$ nodes (two integer values and one pointer
to the child node are required in addition to the node's kind and the
cost of the type tree rooted at this node).  However, $\idx$ and
$\strc$ nodes may require $\Omega(n)$ space in the worst case (e.g.,
if $\idx(n, \hat{D}, \leaf(1))$ is the optimal representation of
$\hat{D}$).  We employ a standard trick often used in dynamic
programming algorithms and store for each node only the information
required to reconstruct the full solution once the algorithm in
Listing~\ref{alg:typetree} has terminated.  If for an $\idx$ node the
count $c$ is known, the full $\idx$ node is easily derived as $\idx(c,
\sequ{\hat{D}[0], \hat{D}[q],\dots, \hat{D}[(c-1)q]}, T_{0,q-1})$ with
$q = n/c$.  The parameters of a $\strc$ node associated with an edge
$(v_i, v_j)$ can be reconstructed by again computing the shortest path
from node $v_i$ to $v_j$ and mapping it to a $\strc$ node as done in
Lemma~\ref{lemma:strc}.  Note that this reconstruction step does not
change the asymptotic runtime bound and that the required space for
each node is $O(1)$, from which the claimed upper bound of $O(n^2)$
space follows directly.
\end{proof}

The following Corollary~\ref{corollary:idxWithCount1IsRoot} and
Lemma~\ref{lemma:optimalForAny} show how the algorithm of
Lemma~\ref{lemma:optimalForNF} can be applied to general displacement
sequences.
\begin{corollary}
\label{corollary:idxWithCount1IsRoot}
For any optimal basic tree with an index node $N$ with count 1, i.e.,
a node $N = \idx(1,\sequ{i_0},\dots)$, a representation $T'$ of equal
cost s.t.\ $N$ is the root node of $T'$, exists.
\end{corollary}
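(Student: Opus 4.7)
The plan is to bubble the unit-count index node $N$ up to the root through a sequence of local, cost- and semantics-preserving swaps with its ancestor $\vct$ nodes. The key preparatory observation, already supplied by Corollary~\ref{corollary:atMostOneIndexWithCount1}, is that no other $\idx$ or $\strc$ node lies on the path from $N$ to the root of the given optimal basic tree $T$. Since leaves have no children, every proper ancestor of $N$ must therefore be a $\vct$ node, so the path from $N$ up to the root is a chain of $\vct$ nodes topped by $N$'s lineage.

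I would proceed by induction on the depth of $N$ in $T$. The base case (depth $0$) is immediate with $T' = T$. For the inductive step, let the parent of $N$ be $P = \vct(c, d, N)$ with $N = \idx(1, \langle i_0 \rangle, C)$, and perform the local rewrite replacing $P$ by $\tilde{P} = \idx(1, \langle i_0 \rangle, \vct(c, d, C))$, leaving the rest of $T$ untouched. Two invariants must be verified: (i) semantic invariance, namely that $\tilde{P}$ and $P$ both flatten to the same sequence consisting of $c$ copies of $C$ placed at displacements $i_0, i_0 + d, \ldots, i_0 + (c-1)d$; and (ii) cost invariance, which follows because the rewrite merely interchanges the roles of a $\vct$ and a unit-count $\idx$ node, while $\cost(\vct(\cdots)) = K_\vct$ and $\cost(\idx(1,\cdots)) = K_\idx + K_\lookup$ depend only on the node kind (and, in the latter case, on the count $c=1$), not on the attached subtree. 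After the rewrite, the new unit-count index node sits at depth one less than before, so the inductive hypothesis produces the desired $T'$.

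The only nontrivial step is invariant (i), which is a short unfolding of the flattening rules for $\vct$ and $\idx$ in the two possible orders; the mild subtlety is to track the shift $i_0$ so that it remains attached to the bubbling $\idx$ node rather than mixing with the stride $d$ of the surrounding $\vct$. I expect no real obstacle beyond this bookkeeping, and the overall argument is of the same flavor as the local restructuring used in the proof of Lemma~\ref{lemma:niceTypeTree}.
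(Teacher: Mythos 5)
Your proof is correct and follows essentially the same idea as the paper's: since Corollary~\ref{corollary:atMostOneIndexWithCount1} guarantees that every proper ancestor of $N$ is a $\vct$ node, the shift $i_0$ commutes past them all and can be applied at the root. The paper performs this move in a single global step ($T' = \idx(1,\sequ{i_0}, T\setminus N)$) rather than by your inductive chain of local $\vct$/$\idx$ swaps, but the underlying observation and the resulting tree are the same.
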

\begin{proof}
Due to Corollary~\ref{corollary:atMostOneIndexWithCount1}, there is no
$\idx$ or $\strc$ node on the path from $N$ to the root and thus $N$
shifts the whole sequence by $i_0$.  This shift can be represented
equivalently by removing $N$ from the basic tree and adding a new root
node to represent the shift, i.e., by letting $T' = \idx(1,
\sequ{i_0}, T \setminus N)$.
\end{proof}

\begin{lemma}
\label{lemma:optimalForAny}
Given optimal basic trees $\hat{T}_{i,j}$ for all normalized segments
$\hat{D}[i,j]$ of a displacement sequence $D$, an optimal basic tree
$T$ representing $D$ can be computed in $O(n^2)$ time and $O(n)$
space.
\end{lemma}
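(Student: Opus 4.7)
The plan is to run a dynamic program on $D$ mirroring Listing~\ref{alg:typetree}, but reusing the pre-computed $\hat{T}_{i,j}$ for every normalized subsegment. When $D[0]=0$, $D=\hat{D}$ and I return $\hat{T}_{0,n-1}$ directly. Otherwise $D[0]\neq 0$, and Lemma~\ref{lemma:niceTypeTree} together with Corollary~\ref{corollary:structureOfNFSolutions} guarantee that any optimal tree for $D$ contains a unique shifted $\idx$ or $\strc$ node of shift $D[0]$, above which only $\vct$ nodes may appear. The strategy is to enumerate three mutually exhaustive candidate shapes for the root of $T$ and return the cheapest.

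The first shape is a shifted $\idx$ at the root: for each divisor $c$ of $n$ for which $\FnRepeated(\hat{D},n,n/c)$ succeeds, I form $\idx(c,\langle D[0],\, D[0]+\hat{D}[n/c],\,\dots,\, D[0]+\hat{D}[(c-1)n/c]\rangle,\hat{T}_{0,n/c-1})$. The second shape is a shifted $\strc$ at the root: following Lemma~\ref{lemma:strc} I build the DAG whose edge $(v_i,v_j)$ carries weight $2K_{\lookup}+\cost(\hat{T}_{i,j-1})$, find a shortest path from $v_0$ to $v_n$, and assemble the resulting $\strc$ with $D[0]$ attached to its first index and the remaining indices read off from $D$. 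The third shape is a $\vct$ at the root that pushes the shift deeper: for each divisor $c$ of $n$ at which $\hat{D}$ is strided with stride $d=\hat{D}[n/c]$, I recursively solve the same lemma on the shorter prefix $D[0,n/c-1]$ (whose first element is again $D[0]$), obtaining $C$, and take $\vct(c,d,C)$.

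Correctness follows from the nice-tree characterization: the root of any optimal tree for $D$ must match one of the three templates, and the subtree below is itself optimal for its own subsequence---either a normalized one, supplied by $\hat{T}_{0,n/c-1}$, or a $D$-style shifted one, returned by the recursive call. For the complexity, the divisor enumeration with periodicity/strided checks costs $O(n\sqrt{n})$ per call; the shortest-path step costs $O(n^2)$ per call; and the recursion in the $\vct$ case at least halves the length every time (because Corollary~\ref{corollary:structureOfNFSolutions} forbids $\vct$ of count $1$), so $O(\log n)$ recursive levels telescope via a geometric series to total work $O(n^2)$, with $O(n)$ space sufficing for the output tree and recursion stack combined.

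The main obstacle will be proving exhaustiveness of the $\vct$ case: one must argue that whenever the optimal root for $D$ is a $\vct$, the first element of its child's sequence is still $D[0]$, so the recursive sub-problem on $D[0,n/c-1]$ is well-posed and solving it for the shorter sequence does not introduce any redundant extra shift beyond the single one allowed by the nice-tree structure.
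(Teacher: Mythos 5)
Your proposal is correct, but it reaches the conclusion by a different route than the paper. The paper's proof does not re-enumerate root shapes at all in the generic case: it takes the already-computed optimal tree $\hat{T}_{0,n-1}$ for $\hat{D}$, locates the first $\idx$ or $\strc$ node on the root-to-leaf paths, and simply adds the shift $s=D[0]$ to that node's indices, which by Lemma~\ref{lemma:niceTypeTree} yields a cost-equivalent (hence optimal) tree for $D$; only when $\hat{T}_{0,n-1}$ contains no $\idx$ or $\strc$ node at all does it fall back to enumerating candidate roots, using Corollary~\ref{corollary:idxWithCount1IsRoot}. You instead ignore the internal structure of $\hat{T}_{0,n-1}$ and rebuild the top ``spine'' of a nice tree for $D$ from scratch: a shifted $\idx$, a shifted $\strc$ (via the Lemma~\ref{lemma:strc} shortest-path graph over the $\hat{T}_{i,j}$), or a $\vct$ with the shift pushed into a recursively solved prefix. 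What your version buys is robustness: it does not depend on which particular optimal tree for $\hat{D}$ was returned, and it explicitly covers the case of $\vct$ nodes sitting above the unique shifted node, which the paper's fallback case analysis ($\idx$ root or $\strc$ root only) glosses over. What it costs is that you always pay the $O(n^2)$ shortest-path computation at the top level, which the paper avoids whenever the shift-transfer case applies. Your ``main obstacle'' dissolves immediately: $\vct(c,d,C)$ places its first copy of $C$ at relative displacement $0$, so the flattened sequence of $C$ is literally $D[0,n/c-1]$, whose first element is $D[0]$; hence the recursive subproblem is well-posed, and by the usual cut-and-paste argument $C$ must be optimal for that prefix. One small repair to the complexity argument: you may recurse once per admissible divisor $c$, not once per level, so the clean geometric series over $O(\log n)$ levels is not quite the right picture; the correct recurrence is $T(n)\le O(n^2)+\sum_{q\mid n,\,q<n}T(q)$, which still solves to $O(n^2)$ because $\sum_{c\ge 2}(n/c)^2=(\pi^2/6-1)\,n^2<n^2$, and the $O(n)$ space bound survives since edge weights can be read off the stored costs of the $\hat{T}_{i,j}$ without materializing the graph.
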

\begin{proof}
By Lemma~\ref{lemma:niceTypeTree}, for any optimal basic tree $T$ a
cost-equivalent nice basic tree $\tilde{T}$ representing the same
displacement sequence $D$ exists and it therefore suffices to find an
optimal nice basic tree representation $\tilde{T}$ for $D$.  By
assumption, an optimal nice basic tree representation $\hat{T} =
\hat{T}_{0,n-1}$ for the normalized sequence $\hat{D}$ exists.  To
construct $\tilde{T}$, find the first node $N$ on any root to leaf
path in $\hat{T}$ that is either an $\idx$ or a $\strc$ node and add
the displacement sequence's shift $s = D[0]$ to the indices of this
node, i.e., if $N = \idx(c, \sequ{i_0,\dots,i_{c-1}}, \hat{T'})$ in
$\hat{T}$, set $\tilde{N} = \idx(c, \sequ{i_0 + s,\dots,i_{c-1} + s},
\hat{T'})$ in $\tilde{T}$ and analogously for the case of $N$ being a
$\strc$ node.  Note that $\tilde{T}$ has the same cost as $\hat{T}$
and thus is an optimal basic tree representation for $D$.

If such a node does not exist, it follows from
Lemma~\ref{lemma:niceTypeTree} and
Corollary~\ref{corollary:idxWithCount1IsRoot} that the optimal
solution is either
\begin{itemize}
 \item $\tilde{T} = \idx(c,\sequ{\dots}, \hat{T}_{0,n/c-1})$, for some divisor $c$ of $n$, or
 \item $\tilde{T} = \strc(c, \sequ{\dots},\sequ{\hat{T}_0,\dots, \hat{T}_{c-1}})$, for some $c$, $1< c<n$.
\end{itemize}
Note that for $\idx$ nodes, both the trivial representation $\idx(n,
D, \leaf(1))$ as well as the representation $\idx(1, \sequ{D[0]},
\tilde{T}$ which only adds a shifted node to $\tilde{T}$ need to be
checked.  Since solutions for all normalized segments are already
known, this construction is feasible in $O(n^2)$ time and $O(n)$
space.
\end{proof}

\begin{proof}[of Theorem~\ref{thm:polytree}]
The \textsc{Basic Type Reconstruction Problem} for a displacement
sequence $D$ of length $n$ can be solved by computing an optimal basic
tree representation for the normalized displacement sequence $\hat{D}$
(Lemma~\ref{lemma:optimalForNF}) and the post-processing step given in
Lemma~\ref{lemma:optimalForAny}.  The claimed space and time bounds
follow directly from the given Lemmas.
\end{proof}

\section{Computing more concise representations}
\label{sec:generalizations}

In this section we discuss possibly more space efficient tree
representations by allowing a richer set of constructors, exemplified
by the auxiliary constructors introduced in
Definition~\ref{def:auxconstructors}. We then explain why computing
representations by DAGs is an apparently harder problem. Finally, we
discuss the applicability of our algorithms to the type normalization
problem.

\subsection{Handling the auxiliary constructors}
\label{sec:auxiliary}

The auxiliary constructors of Definition~\ref{def:auxconstructors} can
be handled by slight extensions to our algorithm in a way that
polynomial-time type reconstruction is still possible. Basically, only
the part that checks for vector or index patterns shown in
Listing~\ref{alg:checkForRepetitions} needs to be extended. Assume
that a repeated prefix $C$ of length $q$ has been found in the given
displacement sequence $D$, and that $D'$ is the displacement sequence
consisting of every $q$th element of $D$, i.e.,
$D'=[D[0],D[q],D[2q],\ldots]$.

The \emph{strided bucket}, $\vctbuc(c, d, e,
\sequ{b_0,b_1,\ldots,b_{c-1}}, C)$ constructor can concisely describe
application data layouts consisting of buckets each with some maximum
number of elements (the stride $d$) where each bucket contains some
(possibly different) number of elements $b_i$ with bucket stride
$e$. This description is likely to be less costly than describing such
a layout by a $\strc$ constructor with each subtype describing one
bucket.  To incorporate the strided bucket it simply has to be checked
in Listing~\ref{alg:checkForRepetitions} whether $D'$ follows the
strided bucket pattern, and this can easily be done in linear
time. There are two cases to consider. If the first bucket has more
than one element, take as bucket stride $e=D'[1]-D'[0]$ and scan the
index list for repetitions at stride $e$. The first violation at some
position $i$ forces the maximum bucket size to be $d=D'[i]-D'[0]$. Now
continue to scan till the end of $D'$, checking that the $e,d$ strided
pattern repeats and counting the number of elements $b_i$ in each
bucket of $e$-strided displacements.  Otherwise, the first bucket has
only one element. Take instead as maximum bucket size $d=D'[1]-D'[0]$,
and scan for repetitions with stride $d$. The first violation at some
position $i$ forces the bucket stride to be $e=D'[i]-D'[i-1]$. As in
the other case, the bucket sizes $b_i$ are counted by scanning $D'$
till the end. If an index $i$ is found where $D'[i]-D'[i-1]\neq e$ and
$D'[i]-D'[j]\neq d$ where $j$ is the start of the current bucket in
$D'$, then $D'$ is not a displacement sequence of a strided bucket
layout.

The strided bucket constructor is in a sense the opposite of the index
constructor. Instead of an index sequence it takes a sequence of
bucket sizes, and has (roughly) the same cost. Interestingly, there is
no such constructor in the MPI standard.

The \emph{indexed bucket}, $\idxbuc(c,\allowbreak e,\allowbreak
\sequ{i_0,i_1,\ldots i_{c-1}},\allowbreak \sequ{b_0,b_1,\ldots
  b_{c-1}},\allowbreak C)$, on the other hand corresponds closely to
the \texttt{MPI\_\-Type\_\-indexed} constructor. For each index, a
repetition count $b_i$ gives the number of repeats of $C$ in the
bucket starting at that index; all repetitions use the same stride $e$
(the constructor could trivially be extended to the case where each
index has its own stride). For each possible bucket stride, the number
of buckets that this stride will give rise to has to be counted. The
stride $e$ leading to a smallest number of buckets is a candidate for
the representation of $D'$ and $C$ as an $\idxbuc$ node. We observe
that each $i$ with $D'[i+1]-D'[i]=e$ joins two $e$-strided segments
$D'[j,i]$ and $D'[i+1,k]$ into one bucket starting at index
$j$. Therefore, the stride that occurs most often in the stride
sequence $S[i]=D'[i+1]-D'[i]$, $0\leq i < n -1$, will lead to the
smallest number of buckets. To count the number of occurrences of each
stride, we either sort $S$ or count by hashing during the scan of
$D'$. Let $e$ be a stride with the most occurrences. A final scan of
$D'$ suffices to compute the start indices and sizes of the buckets
with stride $e$.

\subsection{Type reconstruction into DAGs}
\label{sec:dagreconstruc}

A type tree describing some given displacement sequence may have
multiple instances of the same subtree. Our algorithm in particular
constructs nice type trees (Definition~\ref{def:nicetree}) in which
all displacement sequences in index and struct nodes except perhaps
one start at index $0$, and it can well happen that the same index or
struct node occurs many times. A more concise representation results
if such trees are folded into directed acyclic graphs with only one
node for each substructure.

Type DAGs represent displacement sequences by the same flattening
procedure as shown in Listing~\ref{alg:flattening} for trees. Each
path from the root node in the type DAG to a leaf is traversed in
order to generate the corresponding displacement sequence. Thus the
processing cost of a type DAG would arguably be similar to the
processing costs of a tree. By a similar traversal of a DAG an
equivalent tree can be constructed, simply by making a new copy each
time a node is visited.

The space required for the DAG can be much smaller than the space
required for the corresponding tree. One can therefore define also for
DAGs our cost model for optimizing conciseness as the additive cost of
the nodes in the DAG; and \emph{not} as the sum of the costs of all
paths traversed. The type reconstruction problem into DAGs is now to
find the least-cost DAG representing the given displacement sequence.

One crucial difficulty which arises when dealing with such type DAGs
is that the best representation for a subsequence no longer needs to
be locally optimal, since costs savings can be achieved by reusing
other nodes of the DAG. This is illustrated in
Figure~\ref{fig:DAGExample}.

\begin{figure}[h]
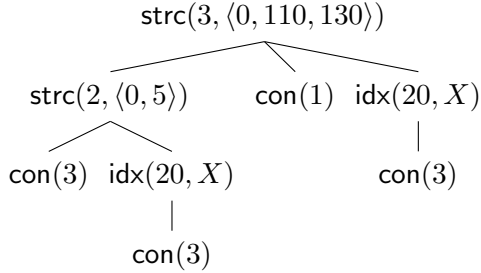

\centering
\tikzset{level distance=30pt, sibling distance=0pt}
\Tree 
    [.$\strc(3,\sequ{0,110,130})$ 
	[.$\strc(2,\sequ{0,5})$
	    [.$\leaf(3)$ ]
	    [.$\idx(20,X)$ 
		[.$\leaf(3)$ ]
	    ]
	]
	[.$\leaf(1)$ ] 
	[.$\idx(20,X)$ 
		[.$\leaf(3)$ ]
	]
    ]
\caption{An unfolding of an optimal type DAG representing a
  displacement sequence $D$; $X$ is an arbitrary subsequence of length
  $20$ over $0,1,\dots,99$. The subtrees rooted at $\idx(20,X)$ only
  contribute to the cost function once. Notice that the subtree rooted
  at $\strc(2,\sequ{0,5})$ is not a least-cost type tree
  representation of the represented subsequence.  }
\label{fig:DAGExample}
\end{figure}

In particular, this implies that the type tree constructed by
unfolding a cost-optimal DAG is not necessarily a cost-optimal tree,
and conversely, that the DAG obtained by folding a given, cost-optimal
type tree is not necessarily a cost-optimal DAG.  This constitutes a
fundamental problem for our general approach for handling type trees,
and new ideas are needed to solve the type reconstruction problem into
DAGs.

\subsection{The type normalization problem}

The type normalization problem subsumes the type reconstruction
problem that we have considered so far. Type normalization asks to
improve the cost of an already given tree description of the data
layout. Since any data layout can be represented as a single $\idx$
node with the whole displacement sequence as index sequence, type
normalization includes type reconstruction as a special case. Type
normalization is the problem that compiler or library implementors are
typically faced with: application data structures described as trees
are given by the programmer as part of the code, and an internal,
optimal representation is to be constructed by the programming system.
 
The trivial solution is to flatten the given type tree and apply the
type reconstruction algorithm on the resulting displacement
sequence. Since the size of the resulting displacement sequence is not
bounded by the size or conciseness of the tree, this is highly
undesirable. We would like a procedure where the complexity can be
bounded by the conciseness of the type trees, specifically the total
size of the index sequences in the tree.

As shown in~\cite{Traff14:normalization}, if the set of basic
constructors is restricted to exclude the $\strc$ constructor, it is
possible to perform type normalization by only rechecking optimality
of the $\idx$ nodes.  In this case, type normalization can be done in
time proportional to the conciseness of the given tree. When the
$\strc$ constructor is allowed, arbitrarily more concise
representations can be possible as shown in
Figure~\ref{fig:typetreeExample}. Optimality of a subtree that does
not use the $\strc$ constructor does therefore not imply optimality
when $\strc$ is allowed. It is therefore necessary to flatten the
whole tree and apply the tree reconstruction algorithm on the
resulting displacement sequence.

\section{Conclusion}
\label{sec:conclusion}

The main result of this paper is that the type reconstruction problem
into trees is actually solvable in polynomial time. However, an
$O(n^4)$ algorithm is not useful for larger values of $n$ as might be
the case in parallel applications where $n$ could be proportional to
the number of processors which in itself could be in the range of tens
to hundreds of thousands. We note that our bottom-up dynamic
programming algorithm performs a considerable amount of almost
redundant checking for (strided) repetitions in displacement sequence
segments. An asymptotically more efficient algorithm, perhaps based on
a top-down approach, is likely to exist. Whether an exact, practically
efficient algorithm for the full problem is possible, we do not know
at the point of writing.

Restricting the power of the constructors can permit more efficient
algorithms.  As shown in~\cite{Traff14:normalization}, if only
$\leaf,\vct$ and $\idx$ nodes are allowed, then the type
reconstruction problem for a displacement sequence of length $n$ can
be solved in $O(n\sqrt{n})$ time. However, the resulting restricted
trees can and often will be much more costly, as shown in
Figure~\ref{fig:typetreeExample}. The high complexity of our algorithm
is caused by the unbounded branching constructor $\strc$ node. A
slightly better, $O(n^3)$ time algorithm would result from allowing
only bounded branching, for instance a binary struct constructor that
catenates only two subtrees. For such a constructor, the shortest path
computation of Lemma~\ref{lemma:strc} could be done in linear time. In
some contexts, bounded branching might be sufficiently expressive.

An alternative approach would be to look for low-complexity
approximation algorithms with provable approximation guarantees. Or,
even weaker, for heuristics that perhaps work well for the intended
application cases. This reflects the state in current MPI libraries.

As discussed, type trees can be represented more concisely as directed
acyclic graphs (DAGs). To the best of our knowledge, it is still open
whether a cost-optimal DAG representation for an arbitrary
displacement sequence can likewise be constructed in polynomial time.

A related problem to consider is the following. Given two displacement
sequences of the same length, construct a least-cost tree (or DAG)
representing a mapping between the two sequences. Such a tree (DAG)
has uses when copying between different data layouts; this arises,
e.g., in matrix transposition. In the MPI context this operation has
been called
\emph{transpacking}~\cite{Traff08:iotypes,RossLathamGroppLuskThakur09}. Our
dynamic programming algorithm may extend to this case as well.

Our work was specifically inspired by the derived data type mechanism
of MPI. We believe that this idea is applicable in a much wider
context of (parallel) programming interfaces and languages, and that
the type normalization and reconstruction problems as defined here, as
well as the associated processing of data layouts represented by
trees, have relevance extending beyond the motivating context.

\bibliographystyle{plain}
\bibliography{traff,parallel}

\end{document}